\newcommand{\minexp}{\ensuremath{\mathrm{min\mbox{-}exp}}}
\renewcommand{\phi}{\varphi}
\newcommand{\tn}[1]{\textnormal{#1}}
\newcommand{\N}{\mathbb{N}}
\newcommand{\Z}{\mathbb{Z}}
\renewcommand{\O}{\mathcal{O}}
\newcommand{\oli}[1]{\overline{#1}}
\newcommand{\redstyle}[1]{\mathop{#1}}
\newcommand{\reduction}[2]{\redstyle{\le_{\mathrm{#2}}^{\mathrm{#1}}}}
\newcommand{\polyreduction}[1]{\reduction{p}{#1}}
\newcommand{\redm}{\polyreduction{m}}
\newcommand{\redmNP}{\reduction{NP}{m}}
\newcommand{\redlogm}{\reduction{log}{m}}
\newcommand{\opstyle}[1]{\mathrm{#1}}
\newcommand{\cNP}{\opstyle{NP}}
\newcommand{\cSigma}[1]{\opstyle{\Sigma^P_{#1}}}
\newcommand{\cPi}[1]{\opstyle{\Pi^P_{#1}}}
\newcommand{\cDelta}[1]{\opstyle{\Delta^P_{#1}}}
\newcommand{\cPSPACE}{\opstyle{PSPACE}}
\newcommand{\cEEEXPSPACE}{\opstyle{3EXPSPACE}}
\newcommand{\MC}[1]{\mathrm{MC}_{\N}(#1)}
\newcommand{\MF}[1]{\mathrm{MF}_{\N}(#1)}
\newcommand{\EC}[1]{\mathrm{EC}_{\N}(#1)}
\newcommand{\EF}[1]{\mathrm{EF}_{\N}(#1)}
\newcommand{\SC}[1]{\mathrm{SC}_{\N}(#1)}
\newcommand{\SF}[1]{\mathrm{SF}_{\N}(#1)}
\newcommand{\CSP}[2][\{\N\}]{\mathrm{CSP}(#1;#2)}
\newcommand{\com}{{^-}}
\newcommand{\free}{\star}
\newcommand{\ignore}[1]{}
\newcommand{\bA}{\mathfrak{A}}
\begin{document}

\title{Constraint Satisfaction Problems around Skolem Arithmetic}
\author{Christian Gla{\ss}er\inst{1} \and Peter Jonsson\inst{2}\thanks{The second author was partially supported by the {\em Swedish
Research Council} (VR) under grant 621-2012-3239.} \and Barnaby Martin\inst{3}\thanks{The third author was supported by EPSRC grant EP/L005654/1.}}
\institute{Theoretische Informatik, Julius-Maximilians-Universit\"at, W\"urzburg, Germany \and \mbox{Dept.} of Computer and Information Science, 
Link\"opings Universitet, 
SE-581 83 Link\"oping, 
Sweden \and School of Science and Technology, Middlesex University,\\
The Burroughs, Hendon, London NW4 4BT\\
}


\maketitle

\begin{abstract}
We study interactions between Skolem Arithmetic and certain classes of Constraint Satisfaction Problems (CSPs). We revisit results of Gla\ss er et \mbox{al.} \cite{GlasserEtAl2010} in the context of CSPs and settle the major open question from that paper, finding a certain satisfaction problem on circuits to be decidable. This we prove using the decidability of Skolem Arithmetic.
We continue by studying first-order expansions of Skolem Arithmetic without constants, $(\mathbb{N};\times)$, as CSPs. We find already here a rich landscape of problems with non-trivial instances that are in P as well as those that are NP-complete.
\end{abstract}
  
\section{Introduction}

A \emph{constraint satisfaction problem} (CSP) is a computational problem in which the input consists of a finite set of variables and a finite set of constraints, and where the question is whether there exists a mapping from the variables to some fixed domain such that all the constraints are satisfied. 
When the domain is finite, and arbitrary constraints are permitted in the input, the CSP is NP-complete. 
However, when only constraints from a restricted set of relations are allowed in the input, it can be possible to solve the CSP in polynomial time. 
The set of relations that is allowed to formulate the constraints in the input is often called the \emph{constraint language}. The question which constraint
languages give rise to polynomial-time solvable CSPs
has been the topic of intensive research over the past years. It has been conjectured by Feder and Vardi~\cite{FederVardi} that CSPs for constraint languages over finite domains have a complexity dichotomy: they are either in P or NP-complete.  This conjecture remains unsettled, although dichotomy is now known on substantial classes (e.g. structures of size $\leq 3$ \cite{Schaefer,Bulatov} and smooth digraphs \cite{HellNesetril,barto:1782}). Various methods, combinatorial (graph-theoretic), logical and universal-algebraic have been brought to bear on this classification project, with many remarkable consequences. A conjectured delineation for the dichotomy was given in the algebraic language in \cite{JBK}.

By now the literature on infinite-domain CSPs is also beginning to mature. Here the complexity can be much higher (\mbox{e.g.} undecidable) but on natural classes there is often the potential for structured classifications, and this has proved to be the case for reducts of, \mbox{e.g.} the rationals with order \cite{tcsps-journal}, the random (Rado) graph \cite{BodPin-Schaefer} and the integers with successor \cite{dCSPs2}; as well as first-order (fo) expansions of linear program feasibility \cite{essentially-convex}. Skolem Arithmetic, which we take here to be the non-negative integers with multiplication (and possibly constants), represents a perfect candidate for continuation in this vein. These natural classes have the property that their CSPs sit in NP and a topic of recent interest for the second and third authors has been natural CSPs sitting in higher complexity classes.

Meanwhile, a literature existed on satisfiability of circuit problems over sets of integers involving work of the first author \cite{GlasserEtAl2010}, itself continuing a line of investigation begun in \cite{StockmeyerMeyer1973} and pursued in \cite{Wagner1984,Yang2001,McKenzieWagner2007}. The problems in \cite{GlasserEtAl2010} can be seen as variants of certain functional CSPs whose domain is all singleton sets of the non-negative integers and whose relations are set operations of the form: complement, intersection, union, addition and multiplication (the latter two are defined set-wise, \mbox{e.g.} $A \times B:= \{ab : a \in A \wedge b \in B\}$). An open problem was the complexity of the problem when the permitted set operators were precisely complement, intersection, union and multiplication. In this paper we resolve that this problem is in fact decidable, indeed in triple exponential space. We prove this result by using the decidability of the theory of Skolem Arithmetic with constants. In studying this problem we are able to bring to light existing results of \cite{GlasserEtAl2010} as results about their related CSPs, providing natural examples with interesting super-NP complexities.

In the second part of the paper, Skolem Arithmetic takes centre stage as we initiate the study of the computational complexity of the CSPs of its reducts, \mbox{i.e.} those constraint languages whose relations have a fo-definition in $(\mathbb{N};\times)$. CSP$(\mathbb{N};\times)$ is in P, indeed it is trivial. The object therefore of our early study is its fo-expansions. We show that CSP$(\mathbb{N};+,\neq)$ is NP-complete, as is CSP$(\mathbb{N};\times,c)$ for each $c>1$. We further show that CSP$(\mathbb{N};\times,U)$ is NP-complete when $U$ is any non-empty set of integers greater than $1$ such that each has a prime factor $p$, for some prime $p$, but omits the factor $p^2$. Clearly, CSP$(\mathbb{N};\times,U)$ is in P (and is trivial) if $U$ contains $0$ or $1$. As a counterpoint to our NP-hardness results, we prove that CSP$(\mathbb{N};\times,U)$ is in P whenever there exists $m>1$ so that $U \supseteq \{m,m^2,m^3,\ldots\}$.

This paper is organised as previewed in this introduction. Several proofs are deferred to the appendix for reasons of space restriction.

\textbf{Related work}. Apart from the research on circuit problems mentioned above
there has been work on other variants like
circuits over integers \cite{tra06} and positive natural numbers \cite{bre07},
equivalence problems for circuits \cite{ghrtw10},
functions computed by circuits \cite{pd09}, and
equations over sets of natural numbers \cite{jo11,jo14}.

\section{Preliminaries}

Let $\mathbb{N}$ be the set of non-negative integers, and let $\mathbb{N}^+$ be the set of positive integers. For $m \in \mathbb{N}$, let $\mathrm{Div}_m$ be the set of factors of $m$. Finally, let $\{\mathbb{N}\}$ be the set of singletons $\{\{x\}:n \in \mathbb{N}\}$. In this paper we use a version of the CSP permitting both relations and functions (and constants). Thus, a \emph{constraint language} consists of a domain together with functions, relations and constants over that domain. One may thus consider a constraint language to be a first-order structure. A constraint language is a \emph{core} if all of its endomorphisms are embeddings (equivalently, if the domain is finite, automorphisms). The functional version of the CSP has previously been seen in, \mbox{e.g.}, \cite{FederMadelaineStewart}. For a purely functional constraint language, a \emph{primitive positive} sentence is the existential quantification of a conjunction of term equalities. More generally, and when relations present, we may have positive atoms in this conjunction. The problem CSP$(\Gamma)$ takes as input a primitive positive sentence $\phi$, and asks whether it is true on $\Gamma$. We will allow that the functions involved on $\phi$ be defined on a larger domain than the domain of $\Gamma$. This is rather \emph{unheimlich}\footnote{Weird. Thus spake Lindemann about Hilbert's non-constructive methods in the resolution of Gordon's problem (see \cite{Smorynski77}).} but it allows the problems of \cite{GlasserEtAl2010} to be more readily realised in the vicinity of CSPs. For example, one such typical domain is $\{\mathbb{N}\}$, but we will allow functions such as $-$ (complement), $\cup$ (union) and $\cap$ (intersection) whose domain and range is the set of all subsets of $\mathbb{N}$. We will also recall the operations of set-wise addition  $A + B:= \{a+b : a \in A \wedge b \in B\}$ and multiplication  $A \times B:= \{ab : a \in A \wedge b \in B\}$.

$\cSigma{i}$, $\cPi{i}$, and $\cDelta{i}$ denote
levels of the polynomial-time hierarchy,
while $\Sigma_i$, $\Pi_i$, and $\Delta_i$ denote
levels of the arithmetical hierarchy.
Moreover, we use the classes $\cNP = \cSigma{1}$,
$\cPSPACE = \bigcup_{k \geq 1} \tn{DSPACE}(n^k)$, and
$$\tn{3EXPSPACE} = \bigcup_{k \geq 1} \tn{DSPACE}\left(2^{2^{2^{n^k}}}\right).$$ For more on these complexity classes we refer the reader to \cite{Papa}.

For sets $A$ and $B$ we say that
$A$ is {\em polynomial-time many-one reducible} to $B$, in symbols $A \redm B$,
if there exists a polynomial-time computable function $f$ such that for all $x$
it holds that $(x \in A \Longleftrightarrow f(x) \in B)$.
If $f$ is even computable in logarithmic space,
then $A$ is {\em logspace many-one reducible} to $B$, in symbols $A \redlogm B$.
$A$ is {\em nondeterministic polynomial-time many-one reducible} to $B$,
in symbols $A \redmNP B$,
if there is a nondeterministic Turing transducer $M$ that runs in polynomial time
such that for all $x$ it holds that $x \in A$ if and only if
there exists a $y$ computed by $M$ on input $x$ with $y \in B$.
The reducibility notions $\redm$, $\redlogm$, and $\redmNP$ are transitive
and $\cNP$ is closed under these reducibilities.

A {\em circuit} $C = (V,E,g_C)$ is a finite, non-empty, directed, acyclic graph
$(V,E)$ with a specified node $g_C \in V$.
The graph can contain multi-edges, it does not have
to be connected, and $V = \{1,2, \ldots, n\}$ for some $n \in \N$.
The nodes in the graph $(V,E)$ are topologically ordered, i.e., for all
$v_1,v_2 \in V$, if $v_1 < v_2$, then there is no path from $v_2$ to $v_1$.
Nodes are also called {\em gates}. Nodes with indegree $0$ are called
{\em input gates} and $g_C$ is called the {\em output gate}.
If there is an edge from gate $u$ to gate $v$,
then we say that $u$ is a {\em predecessor} of $v$ and
$v$ is the {\em successor} of $u$.

Let $\O \subseteq \{\cup,\cap,\com,+,\times\}$.
An {\em $\O$-circuit with unassigned input gates}
$C = (V,E,g_C,\alpha)$ is a circuit $(V,E,g_C)$ whose gates are labeled
by the labeling function $\alpha : V \rightarrow \O \cup \N \cup \{\free\}$
such that the following holds:
Each gate has an indegree in $\{0,1,2\}$, gates with indegree $0$
have labels from $\N \cup \{\free\}$, gates with indegree $1$ have label $\com$,
and gates with indegree $2$ have labels from $\{\cup,\cap,+,\times\}$.
Input gates with a label from $\N$ are called {\em assigned} (or constant)
input gates; input gates with label $\free$ are called {\em unassigned} (or variable)
input gates.
An {\em $\O$-formula} is an $\O$-circuit that only contains nodes
with outdegree one.

Let $u_1 < \cdots < u_n$ be the unassigned inputs in $C$
and $x_1,\ldots,x_n\in\N$.
By assigning value $x_i$ to the input $u_i$,
we obtain an {\em $\O$-circuit} $C(x_1,\ldots,x_n)$
whose input gates are all assigned.
In this circuit, each gate $g$ computes the following set $I(g)$:
If $g$ is an assigned input gate where $\alpha(g)\not=\free$,
then $I(g)=\{\alpha(g)\}$.
If $g=u_k$ is an unassigned input gate, then $I(g)=\{x_k\}$.
If $g$ has label $\com$ and predecessor $g_1$, then $I(g) = \N - I(g_1)$.
If $g$ has label $\circ \in \{\cup,\cap,+,\times\}$ and predecessors
$g_1$ and $g_2$, then $I(g) = I(g_1) \circ I(g_2)$.
Finally, let $I(C(x_1,\ldots,x_n)) = I(g_C)$ be the set computed by the
circuit $C(x_1,\ldots,x_n)$.

\begin{definition}[membership, equivalence, and satisfiability problems of circuits and formulas]
    \label{def_circuit_problems}
    ~\\
    Let $\O \subseteq \{\cup,\cap,\com,+,\times\}$.
    \begin{align*}
        \MC{\O} &= \{(C,b) ~|~
        \parbox[t]{110mm}{
            $C$ is an $\O$-circuit without unassigned inputs and $b \in I(C) \}$
        } \\
        \EC{\O} &= \{(C_1,C_2) ~|~
        \parbox[t]{110mm}{
            $C_1$ and $C_2$ are $\O$-circuits without unassigned inputs and \\
            we have $I(C_1)=I(C_2)\}$
        } \\
        \SC{\O} &= \{(C,b) ~|~
        \parbox[t]{110mm}{
            $C$ is an $\O$-circuit with unassigned inputs $u_1 < \cdots < u_n$
            and\\ there exist $x_1,\dots,x_n \in \N$ such that
            $b \in I\big(C(x_1,\ldots,x_n)\big)\}$
        }
    \end{align*}
    $\MF{\O}$, $\EF{\O}$, and $\SF{\O}$ are the variants that
    deal with $\O$-formulas instead of $\O$-circuits.
\end{definition}

When an $\O$-circuit is used as input for an
algorithm, then we use a suitable encoding such that it is possible to verify
in deterministic logarithmic space whether a given string encodes a valid
circuit.

In Section~\ref{sec:circuits}, for $i \in \mathbb{N}$, we often identify $\{i\}$ with $i$, where this can not cause a harmful confusion.

\section{Circuit Satisfiability and functional CSPs}
\label{sec:circuits}

We investigate the computational complexity of functional CSPs.
In many cases we can translate known lower and upper bounds for
membership, equivalence, and satisfiability problems of arithmetic circuits
\cite{mw07,ghrtw10,grtw10} to CSPs.
Our main result is the decidability of
$\SC{\com,\cup,\cap,\times}$ and $\CSP{\com,\cup,\cap,\times}$,
which solves the main open question of the paper \cite{grtw10}.
Table~\ref{tablecsp} summarizes the results obtained in this section and
shows open questions.
In particular, we would like to improve the gap between the
lower and upper bounds for $\CSP{\O}$,
where $\O$ contains $\cup$ and exactly one arithmetic operation ($+$ or $\times$).


We start with the observation that the equivalence of arithmetic formulas
reduces to functional CSPs. This yields several lower bounds for the CSPs.

\begin{proposition} \label{prop_5147}
    For $\O \subseteq \{\com,\cup,\cap,+,\times\}$ it holds that
    $\EF{\O} \redlogm \CSP{\O}$.
\end{proposition}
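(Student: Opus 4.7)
The plan is to give a direct, structure-preserving translation from an $\EF{\O}$ instance to a $\CSP{\O}$ instance. Given an instance $(F_1,F_2)$ of $\EF{\O}$, where $F_1,F_2$ are $\O$-formulas without unassigned inputs, the key observation is that formulas, unlike circuits, have all gates of outdegree at most $1$; hence, starting from the output gate and following predecessors, each $F_i$ unfolds into a tree and therefore corresponds in an obvious way to a closed term $t_i$ in the signature $\O$, with each input gate labelled by some $n\in\N$ translated to the corresponding constant $\{n\}$ (identified with $n$ under the convention announced at the end of Section~2). This tree traversal, and the output of $t_1$ and $t_2$, can be performed in logarithmic space.

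I would then take the pp-sentence $\phi$ to be the single term equality $t_1 = t_2$, viewed as a degenerate primitive positive sentence with an empty block of existentially quantified variables. By a straightforward induction on the term structure, each subterm of $t_i$ evaluates in the structure $(\{\N\};\O)$ (with $\O$-operations interpreted set-wise on subsets of $\N$) to exactly the set $I(g)$ computed at the corresponding gate $g$ of $F_i$; thus $t_i$ evaluates to $I(F_i)$. It follows immediately that $\phi$ holds in the structure iff $I(F_1)=I(F_2)$, i.e.\ iff $(F_1,F_2)\in\EF{\O}$.

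There is essentially no obstacle here; the only subtlety worth flagging explicitly is that operations such as $\com$, $\cup$, $\cap$ do not take singleton values to singleton values, so some intermediate subterms of $t_1,t_2$ evaluate to sets outside the domain $\{\N\}$. This is precisely the situation covered by the convention from the Preliminaries that \emph{``the functions involved on $\phi$ be defined on a larger domain than the domain of $\Gamma$''}: term equalities are to be interpreted as set equalities in the ambient powerset of $\N$, so the construction is well-defined and the equivalence between the two instances goes through uniformly for every $\O\subseteq\{\com,\cup,\cap,+,\times\}$. The reduction inherits its logspace bound from the triviality of the formula-to-term traversal.
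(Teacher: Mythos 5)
Your proof is correct and is essentially the paper's own reduction: the paper's entire proof consists of the one line mapping $(F_1,F_2)$ to the $\CSP{\O}$-instance $F_1=F_2$. Your elaboration (unfolding the outdegree-one formulas into closed terms, taking the variable-free pp-sentence $t_1=t_2$, and invoking the larger-domain convention from the Preliminaries) just spells out the details the paper leaves implicit.
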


\begin{corollary} \label{coro_5147} \
    \begin{enumerate}
        \item $\CSP{\{\com,\cup,\cap,+\}}$ and $\CSP{\{\com,\cup,\cap,\times\}}$
        are $\redlogm$-hard for $\cPSPACE$.
        \item $\CSP{\{\cup,\cap,+\}}$, $\CSP{\{\cup,\cap,\times\}}$,
        $\CSP{\{\cup,+\}}$, and \\ $\CSP{\{\cup,\times\}}$
        are $\redlogm$-hard for $\cPi{2}$.
    \end{enumerate}
\end{corollary}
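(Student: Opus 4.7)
The plan is to derive both parts as immediate consequences of Proposition~\ref{prop_5147}, which translates any lower bound on the formula equivalence problem $\EF{\O}$ into the same lower bound on $\CSP{\O}$. Thus the work reduces to citing the correct hardness results for $\EF{\O}$ from the literature on arithmetic formulas over sets of natural numbers, namely \cite{mw07,ghrtw10,grtw10}.

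For Part~1, I would recall from the classification of McKenzie--Wagner \cite{mw07} (and its refinements) that the equivalence problems $\EF{\{\com,\cup,\cap,+\}}$ and $\EF{\{\com,\cup,\cap,\times\}}$ are each $\cPSPACE$-hard under $\redlogm$. Proposition~\ref{prop_5147} then gives $\EF{\O}\redlogm \CSP{\O}$ for the two relevant $\O$, and since $\redlogm$ is transitive this yields $\cPSPACE$-hardness of the corresponding CSPs.

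For Part~2, the same strategy applies. The results of \cite{mw07,grtw10} establish that the four complement-free formula equivalence problems $\EF{\{\cup,\cap,+\}}$, $\EF{\{\cup,\cap,\times\}}$, $\EF{\{\cup,+\}}$, and $\EF{\{\cup,\times\}}$ are $\cPi{2}$-hard under $\redlogm$. A further application of Proposition~\ref{prop_5147} then transfers this hardness to each of the four CSPs in question.

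The main obstacle is really bookkeeping rather than mathematical depth: one must check that the cited lower bounds in \cite{mw07,grtw10} are proved (or can be adapted) for the \emph{formula} variant $\EF{\O}$ rather than only for the general circuit variant $\EC{\O}$, and that the reductions produce logspace many-one reductions. Both points are standard in that line of work; where a reference proves a result for circuits, the reduction in fact constructs a formula (or can be padded to one by duplicating subformulas), so the transfer is routine. Once this is noted, no additional calculation is needed and the corollary follows.
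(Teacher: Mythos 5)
Your proposal is correct and takes essentially the same route as the paper: the paper's own proof likewise combines Proposition~\ref{prop_5147} with the known $\redlogm$-completeness of the relevant formula equivalence problems ($\cPSPACE$ for the two languages with complement, $\cPi{2}$ for the four without) and concludes by transitivity of $\redlogm$. The only difference is attribution: the paper cites these completeness results for the formula variants $\EF{\O}$ directly from \cite{ghrtw10} (the equivalence-problems paper, not \cite{mw07}, which concerns membership), so the circuit-versus-formula bookkeeping you flag is already settled by the reference.
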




CSPs with $+$ and $\times$ can express diophantine equations,
which implies the Turing-hardness of such CSPs.

\begin{proposition} \label{prop_73478}
    $\CSP{+,\times}$ is $\redlogm$-hard for $\Sigma_1$.
\end{proposition}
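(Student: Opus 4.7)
The plan is to obtain $\Sigma_1$-hardness via a straightforward reduction from Hilbert's tenth problem over $\mathbb{N}$. By the Davis--Putnam--Robinson--Matiyasevich theorem, deciding whether a given polynomial $p(x_1, \ldots, x_n) \in \mathbb{Z}[x_1, \ldots, x_n]$ has a zero in $\mathbb{N}^n$ is $\Sigma_1$-complete, and the classical encoding of r.e.\ sets as Diophantine sets is well-known to yield hardness already under logspace many-one reductions.

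Given such a polynomial $p$ as input, I would group its monomials by sign of coefficient, writing $p = p_1 - p_2$ where $p_1, p_2 \in \mathbb{N}[x_1, \ldots, x_n]$. Then
\[
\exists \vec{x} \in \mathbb{N}^n.\, p(\vec{x}) = 0 \quad\Longleftrightarrow\quad \exists \vec{x} \in \mathbb{N}^n.\, p_1(\vec{x}) = p_2(\vec{x}),
\]
and the right-hand side is literally a primitive positive sentence over $\{+, \times\}$: the terms $t_1, t_2$ corresponding to $p_1, p_2$ are built from the existentially quantified variables $x_i$, the binary functions $+$ and $\times$, and natural-number constants (coming from the coefficients and any constant term), each such constant being the singleton $\{c\}$, an element of the domain $\{\mathbb{N}\}$. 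I output the pp-sentence $\exists x_1 \cdots \exists x_n.\, t_1 = t_2$ as the $\CSP{+,\times}$ instance; this is true in $(\{\mathbb{N}\}; +, \times)$ precisely when the input Diophantine equation has a natural-number solution, because set-wise $+$ and $\times$ act on singletons as ordinary addition and multiplication.

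The entire transformation is a direct syntactic rewriting of the polynomial equation into a term equation and is clearly computable in logarithmic space. The only conceptual manipulation is the rewrite $p = p_1 - p_2$ that moves from signed into unsigned (pp-expressible) form; granting this, $\redlogm$-hardness of $\CSP{+, \times}$ for $\Sigma_1$ follows at once from the corresponding hardness of Hilbert's tenth problem, and no further obstacle remains.
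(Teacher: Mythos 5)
Your proof is correct and follows essentially the same route as the paper's: both rest on the MRDP theorem together with the trick of splitting the polynomial into its positive and negative monomials so that $p=0$ becomes a term equality $t_1=t_2$, i.e., a pp-sentence over $\{+,\times\}$ with natural-number constants. The only cosmetic difference is that you factor through Hilbert's tenth problem as an intermediate $\redlogm$-hard problem, whereas the paper applies MRDP in universal form (a fixed polynomial $p(a,x,y)$ with a constant $a$ depending on the r.e.\ set $A$), which makes the logspace computability of the reduction immediate rather than cited as known.
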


\begin{proposition} \label{prop_83651}
    $\CSP{\cup,\cap,+,\times} \in \Sigma_1$.
\end{proposition}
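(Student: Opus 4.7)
The plan is to exhibit a semi-decision procedure for $\CSP{\cup,\cap,+,\times}$, which places it in $\Sigma_1$ since this class coincides with the recursively enumerable sets. An instance is a primitive positive sentence of the form $\exists x_1 \cdots \exists x_n \bigwedge_{i} (t_i = t'_i)$, where each $t_i, t'_i$ is a term built from $x_1, \ldots, x_n$ using only $\cup, \cap, +, \times$, and the variables range over the singletons $\{\N\} = \{\{m\} : m \in \N\}$.

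The key structural observation is a closure property: if $A, B \subseteq \N$ are finite, then so are $A \cup B$, $A \cap B$, $A + B = \{a+b : a \in A, b \in B\}$, and $A \times B = \{ab : a \in A, b \in B\}$, and all four operations are effectively computable from explicit enumerations of $A$ and $B$. Since variables are assigned singletons and the signature omits complement $\com$ (which would otherwise turn a finite set into a cofinite one), a straightforward induction on term structure shows that, under any assignment $x_j \mapsto \{m_j\}$, every subterm evaluates to a finite subset of $\N$ and that subset can be computed in finite time.

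Given this, the semi-decision procedure dovetails over all tuples $(m_1, \ldots, m_n) \in \N^n$: for each candidate, substitute $x_j \mapsto \{m_j\}$, evaluate each $t_i$ and $t'_i$ to its finite set of naturals, and check whether $t_i = t'_i$ holds for all $i$ (equality of finite subsets of $\N$ is decidable). Accept on the first witnessing tuple; otherwise continue. The procedure halts and accepts precisely on positive instances, establishing $\CSP{\cup,\cap,+,\times} \in \Sigma_1$.

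There is essentially no obstacle once the finiteness closure is pinned down; the statement is really saying that dropping complement from the signature of Proposition~\ref{prop_73478} preserves recursive enumerability while (by that proposition) retaining $\Sigma_1$-hardness, so the two results together pin the complexity down to $\Sigma_1$-completeness.
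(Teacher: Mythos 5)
Your proof is correct and follows essentially the same route as the paper's: the paper's one-line argument observes that satisfaction of an instance by a given assignment is decidable and then semi-decides by searching over assignments, which is exactly your dovetailing procedure. Your elaboration of \emph{why} verification is decidable (complement-free terms over singleton assignments evaluate to effectively computable finite sets) is just the detail the paper leaves implicit.
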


\begin{proposition} \label{prop_67126}
    $\CSP{\com,\cup,\cap,+,\times} \in \Sigma_2$.
\end{proposition}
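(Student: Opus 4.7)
The plan is to exhibit a $\Sigma_2$ formula of the arithmetic hierarchy defining the problem. Write the input pp-sentence as $\phi = \exists v_1\cdots\exists v_m \bigwedge_{i=1}^k t_i = t_i'$. The natural rewrite of each set equality as
$$t_i(\vec{x}) = t_i'(\vec{x}) \iff \forall n \in \N \bigl( n \in I(t_i(\vec{x})) \iff n \in I(t_i'(\vec{x})) \bigr)$$
yields an $\exists\vec{x}\,\forall n$ shape. The substantive issue is that the matrix involves circuit membership, which is not uniformly decidable: at $n=0$, expanding $n \in A\times B$ produces emptiness questions on subcircuits, and emptiness for such circuits captures truth in $(\N;+,\times)$.

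To sidestep this I would augment the existential to guess, alongside $x_1,\ldots,x_m$, the following bits of auxiliary information for each subterm $s$ of $t_1,t_1',\ldots,t_k,t_k'$: a bit $z_s$ asserting $0 \in I(s(\vec{x}))$, a bit $e_s$ asserting $I(s(\vec{x})) \neq \emptyset$, and when $e_s=1$ a witness $w_s \in \N$ allegedly in $I(s(\vec{x}))$. Given the $z_s$ bits, for each $n \geq 1$ the statement $n \in I(s(\vec{x}))$ unfolds into a finite-divisor/finite-summand recursion over subterms (since $a+b=n$ and $ab=n$ admit only finitely many decompositions when $n \geq 1$), with the $z$-bits handling the $0$-branches; the recursion is therefore decidable. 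The universal $n \in \N$ then checks: local consistency of the $z_s$ with each gate's combinator (for $\times$-gates this couples with the $e$-bits); for every $s$ with $e_s=1$ that $w_s \in I(s(\vec{x}))$ (decidable by the same recursion); for every $s$ with $e_s=0$ that $n \notin I(s(\vec{x}))$; and for every equation the biconditional $n \in I(t_i(\vec{x})) \iff n \in I(t_i'(\vec{x}))$. Each clause is recursive in the certificate and $n$, and collectively they force the guessed bits to agree with reality.

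The main obstacle is exactly the emptiness/zero issue: empty sets cannot be certified by a single membership query, so I would verify them by requiring the universal $\forall n$ never to produce a witness, and verify non-empty sets through the explicit $w_s$ carried in the existential. Once this bookkeeping is in place, the remaining steps — propagation rules for $z$ and $e$ through Boolean and arithmetic gates, termination of the recursion at $n\geq 1$, and tying the guesses to the truth of each equation — are routine. The entire construction then has the shape $\exists\vec{y}\,\forall n\,R$ with $R$ decidable, giving $\CSP{\com,\cup,\cap,+,\times} \in \Sigma_2$.
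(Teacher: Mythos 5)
Your proof is correct, but it follows a genuinely different route from the paper's. The paper settles the proposition in two lines: it invokes the theorem of Gla{\ss}er et al.\ \cite{ghrtw10} that the equivalence problem $\EC{\com,\cup,\cap,+,\times}$ lies in $\Delta_2$, and then observes that an instance $\exists y \in \N^n\,[t_0=t_1\wedge\cdots\wedge t_{2m}=t_{2m+1}]$ belongs to $\CSP{\com,\cup,\cap,+,\times}$ if and only if there exists $y\in\N^n$ such that every pair $(t_{2i},t_{2i+1})$ is equivalent; an existential number quantifier in front of a $\Delta_2$ predicate gives $\Sigma_2$. You instead construct the $\Sigma_2$ formula directly: you isolate the only real obstruction to a decidable matrix (the $n=0$ case of $\times$-gates, which raises emptiness questions about subcircuits), and you neutralise it by guessing, for every subterm $s$, a zero-membership bit $z_s$, a non-emptiness bit $e_s$, and a witness $w_s$, then verifying non-emptiness existentially through the witness, emptiness universally, the $z$-bits by local propagation (including the input-gate base case), and the equations by a recursion that is decidable for $n\ge 1$ once the $z$-bits are fixed. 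The closing claim that the checks force the guessed bits to agree with reality is only asserted, but the induction on term depth does go through: correctness of the bits at proper subterms pins the $e$-bits via the witness and the universal check, which in turn makes the $z$-propagation sound one level up. What each approach buys: the paper's argument is short but leans entirely on the imported $\Delta_2$ theorem, whereas yours is self-contained and makes explicit where the second quantifier alternation really comes from, in effect re-deriving exactly the fragment of \cite{ghrtw10} that the paper uses. One aside of yours is an overstatement: emptiness of these circuits does not ``capture truth in $(\N;+,\times)$''; emptiness many-one reduces to equivalence with a circuit computing $\emptyset$ (e.g.\ $\{0\}\cap\overline{\{0\}}$), so by the cited bound it sits in $\Delta_2$, whereas first-order arithmetic truth is not arithmetical at all. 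Nothing in your construction depends on that remark, so it costs you accuracy but not correctness.
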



We show that the decidability of Skolem arithmetic \cite{FR79} can be used to
decide the satisfiability of arithmetic circuits without $+$.
This solves the main open question of the paper \cite{grtw10}
and at the same time implies the decidability of corresponding CSPs.

\begin{theorem} \label{thm_3expspace}
    $\SC{\com,\cup,\cap,\times} \in \tn{3EXPSPACE}$.
\end{theorem}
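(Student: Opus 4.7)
The plan is to encode an instance $(C,b)$ of $\SC{\com,\cup,\cap,\times}$ as a first-order sentence $\psi$ of Skolem arithmetic with constants, i.e.\ the structure $(\N;\times,(c)_{c\in\N})$, and then to invoke the known decidability of this theory together with an explicit space bound for its decision procedure. The correspondence will be arranged so that $(\N;\times,\ldots)\models\psi$ iff the unassigned inputs of $C$ can be set so that $b$ lies in the set computed at the output gate.

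For each gate $g$ of $C$ I would define, by induction on the topological order, a formula $\phi_g(y)$ in the free variables $y,x_1,\ldots,x_n$ (where $x_1,\ldots,x_n$ stand for values assigned to the unassigned inputs $u_1,\ldots,u_n$) that defines ``$y\in I(g)$''. An assigned input labelled $c$ yields $y=c$ and the unassigned input $u_k$ yields $y=x_k$. At a $\com$-gate we take $\neg\phi_{g_1}(y)$; at a $\cup$- or $\cap$-gate we take the corresponding disjunction or conjunction of $\phi_{g_1}(y)$ and $\phi_{g_2}(y)$; and at a $\times$-gate we use the existential rewriting $\exists z_1\exists z_2\,(y=z_1\cdot z_2\wedge\phi_{g_1}(z_1)\wedge\phi_{g_2}(z_2))$. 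Setting $\psi:=\exists x_1\cdots\exists x_n\,\phi_{g_C}(b)$, a routine induction on the gates of $C$ establishes the intended equivalence between $\psi$ and membership of $(C,b)$ in $\SC{\com,\cup,\cap,\times}$.

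For the complexity bound we expand this recursive specification into a tree-shaped sentence: each shared subcircuit is duplicated along every path to the output, so the resulting sentence has size $N=2^{O(|C|)}$. We then appeal to a decision procedure for Skolem arithmetic with constants that runs in doubly exponential space in the sentence size; such a bound can be obtained by combining a Feferman--Vaught style decomposition of $(\N^+;\times)$ into its prime-indexed Presburger components with Ferrante--Rackoff's analysis of Presburger arithmetic. Plugging $N=2^{O(|C|)}$ into $\tn{DSPACE}\bigl(2^{2^{O(N)}}\bigr)$ yields $\tn{DSPACE}\bigl(2^{2^{2^{O(|C|)}}}\bigr)$, which is contained in $\tn{3EXPSPACE}$.

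The main obstacle I anticipate is keeping the complexity accounting honest: the constants occurring in $C$ (notably $b$ and the labels of assigned inputs) may be exponentially large in binary, and their prime factorisations determine which primes enter the Feferman--Vaught reduction, so one has to check that both the number of relevant primes and the sizes of the associated exponent bounds remain comfortably inside the third exponential. With that accounting in place, the $\tn{3EXPSPACE}$ upper bound falls out; a sharper bound would presumably require a decision procedure that operates directly on the DAG-shaped sentence arising from $C$ rather than passing through a full tree expansion.
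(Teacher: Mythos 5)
Your gate-by-gate translation into Skolem arithmetic is semantically sound and is indeed the right starting point, but the complexity accounting contains a genuine gap at its central step: you expand the circuit into a tree, obtaining a sentence of size $N=2^{O(|C|)}$, and then assert that Skolem arithmetic with constants is decidable in \emph{doubly} exponential space, citing a Feferman--Vaught decomposition into Presburger components combined with Ferrante--Rackoff's analysis. That combination does not give a doubly exponential bound: the weak-direct-power decomposition itself costs an exponential on top of Presburger's $\tn{DSPACE}(2^{2^{cn}})$ bound, which is exactly why the Ferrante--Rackoff bound for Skolem arithmetic --- the bound this paper cites --- is \emph{triple} exponential space. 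No doubly-exponential-space procedure is known (and Fischer--Rabin-type lower bounds for the theory of multiplication, of the form nondeterministic time $2^{2^{2^{cn}}}$, make such a procedure a substantial open improvement rather than a citable fact). Plugging your exponential-size sentence into the actual $\tn{3EXPSPACE}$ procedure yields only a $\tn{4EXPSPACE}$ bound, which falls short of the theorem.

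The paper closes precisely this hole by never letting the formula blow up: instead of recursing into both predecessors of a binary gate (which duplicates shared subcircuits), it builds a single chain $\varphi_1,\ldots,\varphi_r$ in which $\varphi_k$ contains exactly \emph{one} occurrence of $\varphi_{k-1}$. The inductive formula carries extra free variables $(i_k,v_k,b_k)$ whose meaning is ``the set computed at gate $i_k$ contains $v_k$ iff $b_k=1$'', and at $\cup$- and $\times$-gates a universal quantifier (over an auxiliary variable $e_k$) is used to branch between querying predecessor $p$ and predecessor $q$ inside that single copy. The resulting sentence $\exists a_1,\ldots,a_n\,\varphi_r(a_1,\ldots,a_n,r,v,1)$ has polynomial size, so $\SC{\com,\cup,\cap,\times}$ reduces in polynomial time to Skolem arithmetic and the $\tn{3EXPSPACE}$ bound of \cite{FR79} applies directly. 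Note that this trick trades formula size for quantifier alternation, which is harmless for the space bound; your closing remark that a DAG-aware procedure would merely ``sharpen'' the bound has it backwards --- avoiding the tree expansion is what makes the claimed bound attainable at all.
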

\begin{proof}
    Let $C = C(x_1, \ldots, x_n)$ be a circuit with gates $g_1, \ldots, g_r$,
    where $g_1, \ldots, g_n$ are the input gates and $g_r$ is the output gate.
    Without loss of generality we may assume that $C$ does not have $\cap$-gates.
    For every gate $g_k$ we define a formula
    $\varphi_k := \varphi_k(x_1, \ldots, x_n, i_k, v_k, b_k)$
    in Skolem arithmetic such that the following holds.

\vspace{0.2cm}
\noindent ($*$)  For $a_1, \ldots, a_n, v \in \N$, $b \in \{0,1\}$, and $i = 1, \ldots, k$
        it holds that
        $\varphi_k(a_1, \ldots, a_n, 0,$ $v, b)$ is true and \vspace{-.2cm}
\begin{itemize}
\item $\varphi_k(a_1, \ldots, a_n, i, v, b)$ is true IFF
\item ($b=1$ iff $C(a_1, \ldots, a_n)$ produces at $g_i$ a set that contains $v$).
\end{itemize}

    Let $\varphi_0 := b_0 \vee \neg b_0 \vee (x_1 \cdot \ldots \cdot x_n \cdot i_0 \cdot v_0 = 0)$, which is always true.
    For $1 \le k \le n$, the formula $\varphi_k$ which corresponds to the $k$-th
    input gate is defined as
    $$\varphi_k :=
    \parbox[t]{140mm}{
        $\exists i_{k-1}, v_{k-1}, b_{k-1}$\\
        $[(i_k=k \wedge b_k=0) \rightarrow (x_k \neq v_k \wedge i_{k-1}=0)] \wedge$\\
        $[(i_k=k \wedge b_k=1) \rightarrow (x_k = v_k \wedge i_{k-1}=0)] \wedge$\\
        $[i_k \neq k \rightarrow (i_{k-1}=i_k \wedge v_{k-1}=v_k \wedge b_{k-1}=b_k)] \wedge$\\
        $\varphi_{k-1}$.
    }$$
    Observe that the free variables of $\varphi_k$ are variables are $x_1, \ldots, x_n, i_k, v_k, b_k$,
    i.e., $\varphi_k =$  $\varphi_k(x_1, \ldots, x_n, i_k, v_k, b_k)$.
    Moreover, an induction on $k$ shows that ($*$) holds for all $\varphi_k$
    where $0 \le k \le n$.

    Now define the formulas $\varphi_k$ for the inner gates $g_k$
    where $n < k \le r$.
    Here $d_k$, $e_k$, $f_k$, $f_k'$, $h_k$, and $h_k'$ are used as
    auxiliary variables.

    If $g_k$ is a complement gate with predecessor $g_p$, then let
    $$\varphi_k :=
    \parbox[t]{140mm}{
        $\exists i_{k-1}, v_{k-1}, b_{k-1}$\\
        $[i_k=k \rightarrow (i_{k-1}=p \wedge v_{k-1}=v_k \wedge (b_k=1 \rightarrow b_{k-1}=0) \wedge (b_k=0 \rightarrow b_{k-1}=1))] \wedge$\\
        $[i_k \neq k \rightarrow (i_{k-1}=i_k \wedge v_{k-1}=v_k \wedge b_{k-1}=b_k)] \wedge\\
        \varphi_{k-1}$.
    }$$

    If $g_k$ is a $\cup$-gate with predecessors $g_p$ and $g_q$, then let
    $$\varphi_k :=
    \parbox[t]{140mm}{
        $\exists f_k, h_k \forall e_k \exists i_{k-1}, v_{k-1}, b_{k-1}$\\
        $[(i_k=k \wedge e_k=0) \rightarrow (i_{k-1}=p \wedge v_{k-1}=v_k \wedge b_{k-1}=f_k)] \wedge$\\
        $[(i_k=k \wedge e_k \neq 0) \rightarrow (i_{k-1}=q \wedge v_{k-1}=v_k \wedge b_{k-1}=h_k)] \wedge$\\
        $[(i_k=k \wedge b_k=1) \rightarrow (f_k=1 \vee h_k=1)] \wedge$\\
        $[(i_k=k \wedge b_k=0) \rightarrow (f_k=0 \wedge h_k=0)] \wedge$\\
        $[i_k \neq k \rightarrow (i_{k-1}=i_k \wedge v_{k-1}=v_k \wedge b_{k-1}=b_k )] \wedge$\\
        $\varphi_{k-1}$.
    }$$

    If $g_k$ is a $\times$-gate with predecessors $g_p$ and $g_q$, then let
    $$\varphi_k :=
    \parbox[t]{140mm}{
        $\exists f_k, f_k' \forall e_k \forall h_k, h_k' \exists d_k \exists i_{k-1}, v_{k-1}, b_{k-1}$\\
        $[(i_k=k \wedge b_k=1 \wedge e_k=0) \rightarrow (f_k \cdot f_k' = v_k \wedge i_{k-1}=p \wedge v_{k-1}=f_k \wedge b_{k-1}=1)] \wedge$\\
        $[(i_k=k \wedge b_k=1 \wedge e_k \neq 0) \rightarrow (f_k \cdot f_k' = v_k \wedge i_{k-1}=q \wedge v_{k-1}=f_k' \wedge b_{k-1}=1)] \wedge$\\
        $[(i_k=k \wedge b_k=0 \wedge h_k \cdot h_k' = v_k \wedge d_k=0) \rightarrow (i_{k-1}=p \wedge v_{k-1}=h_k \wedge b_{k-1}=0)] \wedge$\\
        $[(i_k=k \wedge b_k=0 \wedge h_k \cdot h_k' = v_k \wedge d_k \neq 0) \rightarrow (i_{k-1}=q \wedge v_{k-1}=h_k' \wedge b_{k-1}=0)] \wedge$\\
        $[i_k \neq k \rightarrow (i_{k-1}=i_k \wedge v_{k-1}=v_k \wedge b_{k-1}=b_k )] \wedge$\\
        $\varphi_{k-1}$.
    }$$

    Again it holds that $\varphi_k$'s free variables are
    $x_1, \ldots, x_n, i_k, v_k, b_k$ and an induction on $k$
    shows that ($*$) holds for all $\varphi_k$ where $0 \le k \le r$.
    So for the output gate $g_r$ we obtain
    $$(C,v) \in \SC{\com,\cup,\cap,\times)}
    \;\Longleftrightarrow\;
    \exists a_1, \ldots, a_n \; \varphi_r(a_1, \ldots, a_n, r, v, 1).$$
    The right-hand side is a first-order sentence of Skolem arithmetic.
    On input $(C,v)$ this sentence can be computed in polynomial time,
    which shows that $\SC{\com,\cup,\cap,\times}$ is $\redm$-reducible
    to Skolem arithmetic.
    The latter is decidable in $\tn{3EXPSPACE}$ \cite{FR79}.
\end{proof}

\begin{corollary} \label{coro_76612}
    $\CSP{\com,\cup,\cap,\times} \in \tn{3EXPSPACE}$
\end{corollary}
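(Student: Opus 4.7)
The plan is to adapt the Skolem-arithmetic encoding from the proof of Theorem \ref{thm_3expspace} so that it expresses a whole conjunction of set \emph{equalities} rather than a single membership query, then invoke the $\tn{3EXPSPACE}$ decidability of Skolem arithmetic \cite{FR79} once more.

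Starting from a primitive positive instance
$$\phi \;=\; \exists y_1, \ldots, y_n \;\bigwedge_{j=1}^{m} (t_j = s_j),$$
with each $t_j, s_j$ a term over $\{\com, \cup, \cap, \times\}$, I would first amalgamate all of the $t_j$ and $s_j$ into a single $\{\com,\cup,\cap,\times\}$-circuit $C$ whose unassigned input gates $u_1, \ldots, u_n$ correspond to $y_1, \ldots, y_n$, noting the gate indices $p_j$ and $q_j$ that output the sets denoted by $t_j$ and $s_j$, respectively. Observe that singleton values $\{a_j\}$ assigned to CSP variables correspond exactly to natural-number assignments $a_j$ at unassigned input gates, so the primitive positive instance becomes a question about the shared circuit $C$.

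Running the inductive construction of Theorem \ref{thm_3expspace} on $C$ produces a Skolem-arithmetic formula $\varphi_r(x_1, \ldots, x_n, i, v, b)$ satisfying property ($*$): it is true iff, under the input assignment $\bar{x}$, gate $g_i$ computes a set whose membership of $v$ is indicated by $b$. Set equality can then be expressed extensionally, yielding
$$\phi \;\Longleftrightarrow\; \exists a_1, \ldots, a_n \;\bigwedge_{j=1}^{m} \forall v \;\bigl(\varphi_r(\bar{a}, p_j, v, 1) \leftrightarrow \varphi_r(\bar{a}, q_j, v, 1)\bigr),$$
which is a first-order sentence of Skolem arithmetic computable from $\phi$ in polynomial time. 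Decidability of Skolem arithmetic in $\tn{3EXPSPACE}$ then closes the argument.

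The translation is essentially mechanical once Theorem \ref{thm_3expspace} is in hand, so the only thing to check is that wrapping each equality with a universal quantifier over $v$ and taking a polynomial-sized conjunction keeps the sentence within the fragment handled by \cite{FR79}; since that decision procedure handles arbitrary first-order Skolem-arithmetic sentences, no additional obstacle arises. The sole subtlety worth flagging in the write-up is the identification of the CSP domain $\{\N\}$ of singletons with $\N$ itself at the unassigned inputs of $C$, which is precisely the convention already fixed in Section~\ref{sec:circuits}.
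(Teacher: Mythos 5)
Your proof is correct, but it takes a genuinely different route from the paper's. The paper keeps Theorem~\ref{thm_3expspace} as a black box and gives a polynomial-time many-one reduction $\CSP{\com,\cup,\cap,\times} \redm \SC{\com,\cup,\cap,\times}$: it merges all constraints into a single set by rewriting $\bigwedge_i (t_{2i}=t_{2i+1})$ as emptiness of the union of symmetric differences $\bigcup_i [(t_{2i} \cap \overline{t_{2i+1}}) \cup (\overline{t_{2i}} \cap t_{2i+1})]$, and then converts emptiness into a membership query via the observation that $X=\emptyset$ iff $0 \in \overline{0 \cdot X}$ (multiplication by the constant $0$ collapses any nonempty set to $\{0\}$). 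You instead open up the proof of Theorem~\ref{thm_3expspace}: you reuse its gate-by-gate Skolem-arithmetic encoding $\varphi_r$ and property~($*$), which indeed licenses querying any gate index $i \le r$, and you express each equality extensionally as $\forall v\,(\varphi_r(\bar a,p_j,v,1) \leftrightarrow \varphi_r(\bar a,q_j,v,1))$. This is sound: the resulting sentence is polynomial-size, the Ferrante--Rackoff procedure decides arbitrary first-order Skolem-arithmetic sentences, so the added universal quantifiers and biconditionals cost nothing, and composing a polynomial-time translation with a $\tn{3EXPSPACE}$ decision procedure stays in $\tn{3EXPSPACE}$. What the paper's route buys is modularity: only the \emph{statement} of Theorem~\ref{thm_3expspace} is used, and the reduction from the CSP to the satisfiability problem is of independent interest. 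What yours buys is directness: no $0\cdot$ trick is needed, and your method visibly extends to any first-order-expressible property of the gate sets, not just membership. One point you should make explicit in a write-up: your amalgamated circuit may contain $\cap$-gates, for which the encoding of Theorem~\ref{thm_3expspace} has no clause, so you must first eliminate them via $A \cap B = \overline{\overline{A} \cup \overline{B}}$, exactly as the theorem's without-loss-of-generality step does.
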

\begin{proof}
    By Theorem~\ref{thm_3expspace}, it suffices to show
    $\CSP{\com,\cup,\cap,\times} \redm \SC{\com,\cup,\cap,\times}$.
    We describe the reduction on input of a $\CSP{\com,\cup,\cap,\times}$-instance
    $x = \exists y \in \N^n \bigwedge_{i=0}^m (t_{2i}=t_{2i+1})$.
    Observe that
    \begin{eqnarray*}
        \bigwedge_{i=0}^m (t_{2i}=t_{2i+1}) &\Longleftrightarrow&
        \bigwedge_{i=0}^m (t_{2i} \cap \overline{t_{2i+1}}) \cup
        (\overline{t_{2i}} \cap t_{2i+1})=\emptyset\\
        &\Longleftrightarrow&
        \bigcup_{i=0}^m [(t_{2i} \cap \overline{t_{2i+1}}) \cup
        (\overline{t_{2i}} \cap t_{2i+1})] \, = \, \emptyset\\
        &\Longleftrightarrow&
        0 \; \in \; \underbrace{\overline{0 \cdot
        \bigcup_{i=0}^m [(t_{2i} \cap \overline{t_{2i+1}}) \cup
        (\overline{t_{2i}} \cap t_{2i+1})]}}_{C :=}.
    \end{eqnarray*}
    So $x \in \CSP{\com,\cup,\cap,\times}$ if and only if
    $(C,0) \in \SC{\com,\cup,\cap,\times}$.
\end{proof}

\begin{corollary} \label{coro_9843}
    $\CSP{\com,\cup,\cap,+} \in \tn{3EXPSPACE}$
\end{corollary}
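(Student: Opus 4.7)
My plan is to mimic the strategy of Theorem~\ref{thm_3expspace} and Corollary~\ref{coro_76612} in the additive setting, substituting Presburger arithmetic $(\N;+)$ for Skolem arithmetic. The theory of $(\N;+)$ is decidable and, by the Fischer--Rabin-style bounds \cite{FR79}, sits within doubly exponential space, hence well inside $\tn{3EXPSPACE}$; so it will suffice to exhibit a polynomial-time reduction of $\CSP{\com,\cup,\cap,+}$ to Presburger sentences.

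The ``emptiness at $0$'' shortcut used in Corollary~\ref{coro_76612} does not carry over to the additive setting: that trick relies on $\{0\}\times S$ being $\emptyset$ or $\{0\}$ according as $S$ is empty or not, whereas $\{0\}+S=S$ is shift-equivariant and carries no such information. I do not expect any $\{+,\cup,\cap,\com\}$-expression $f$ to satisfy $0\in f(S)\iff S=\emptyset$ in general, so rather than routing through $\SC{\com,\cup,\cap,+}$ I would encode CSP instances directly into Presburger. Given an instance $\exists y_1,\ldots,y_n\;\bigwedge_{j}(t_{2j}=t_{2j+1})$, I build by structural induction on each term $t$ a Presburger formula $\psi_t(y_1,\ldots,y_n,v)$ asserting that $v$ lies in the set denoted by $t$ under the assignment $\vec y$: a constant $c$ gives $v=c$; a variable input $y_i$ gives $v=y_i$; $\com$, $\cup$ and $\cap$ translate to $\neg$, $\vee$ and $\wedge$; and $t_1+t_2$ becomes $\exists v_1,v_2\,(v=v_1+v_2 \wedge \psi_{t_1}(\vec y,v_1) \wedge \psi_{t_2}(\vec y,v_2))$. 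The whole instance is then equivalent to the Presburger sentence
\[
\exists y_1,\ldots,y_n\;\forall v\;\bigwedge_{j}\bigl(\psi_{t_{2j}}(\vec y,v)\leftrightarrow\psi_{t_{2j+1}}(\vec y,v)\bigr),
\]
which is computable in polynomial time because terms in a primitive positive sentence are trees, not DAGs.

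The remaining work is routine: verify by induction that $\psi_t$ correctly describes the set computed by $t$, track that its size is polynomial, and invoke the Fischer--Rabin decidability bound. The only real conceptual obstacle is the observation that the multiplicative shortcut of Corollary~\ref{coro_76612} cannot simply be relabelled with $+$ in place of $\times$; once one accepts that the additive analogue must be handled by a direct translation into $(\N;+)$ rather than by an $\SC$-level reduction, the argument dispatches itself. As a by-product, the same gate-by-gate construction used in the proof of Theorem~\ref{thm_3expspace}, with $f_k+f_k'=v_k$ replacing $f_k\cdot f_k'=v_k$ and Presburger replacing Skolem, also yields $\SC{\com,\cup,\cap,+}\in\tn{3EXPSPACE}$, though this stronger statement is not required here.
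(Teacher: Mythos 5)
Your proof is correct, but it takes a genuinely different route from the paper's. The paper never touches Presburger arithmetic: it gives a polynomial-time reduction $\CSP{\com,\cup,\cap,+} \redm \CSP{\com,\cup,\cap,\times}$ that encodes a number $n$ as $2^n$ --- constants $c>1$ are first eliminated by a doubling gadget, every remaining constant $c$ becomes $2^c$, every $+$ becomes $\times$, every complement $\oli{s}$ is relativized to $\oli{s}\cap q$ where $q$ is an explicit term generating $\{2^i : i\in\N\}$, and each variable is forced to be a power of $2$ by the constraint $y_i \cup q = q$ --- and then the bound is inherited from Corollary~\ref{coro_76612}. In particular, the obstacle you identify (that the $0\cdot(\cdot)$ emptiness trick of Corollary~\ref{coro_76612} has no additive analogue) is real but moot for the paper: it routes through the multiplicative CSP rather than through $\SC{\com,\cup,\cap,+}$, a third option you did not consider. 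Your direct translation into Presburger arithmetic is nevertheless sound: terms of a primitive positive sentence are trees, so the inductively defined formulas $\psi_t$ stay polynomial-size; set equality is handled extensionally by the $\forall v$ quantifier; and the Ferrante--Rackoff bound \cite{FR79} decides Presburger arithmetic in doubly exponential space. What your route buys is a sharper conclusion --- $\CSP{\com,\cup,\cap,+}$ lands in $\tn{2EXPSPACE}$, strictly inside the claimed $\tn{3EXPSPACE}$ --- together with the by-product bound for $\SC{\com,\cup,\cap,+}$ via the gate-by-gate construction of Theorem~\ref{thm_3expspace} with $f_k+f_k'=v_k$ in place of $f_k\cdot f_k'=v_k$. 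What the paper's route buys is economy and structure: it reuses Theorem~\ref{thm_3expspace} wholesale and establishes an intrinsic reduction from the additive to the multiplicative CSP, which is of independent interest beyond the complexity bound itself.
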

\begin{proof}
    By Corollary~\ref{coro_76612}, it suffices to show that we have
    $\CSP{\com,\cup,\cap,+} \redm$ $\CSP{\com,\cup,\cap,\times}$.
    Consider a $\CSP{\com,\cup,\cap,+}$-instance
    $$x := \exists y \in \N^n \bigwedge_{i=0}^m (t_{2i}=t_{2i+1}).$$

    We may assume that $0$ and $1$ are the only constants that occur in $x$.
    We can do this, since constants $c>1$ can be removed as follows:
    Let $l=\lfloor \log c \rfloor$,
    replace $c$ with a new variable $z$, and add constraints
    $$(z_0=\{2\}) \wedge (z_1=z_0 + z_0) \wedge \cdots \wedge
    (z_l=z_{l-1} + z_{l-1}) \wedge (z = \sum_{i \in I} z_i),$$
    where $z_0, \ldots, z_l$ are new variables and
    \[I = \{ i ~|~ \tn{the $i$-th bit in $c$'s binary representation is $1$}\}.\]
    Note that removing constants in this way can be done in polynomial time.

    Observe that the term
    $q := \oli{\left(\oli{\oli{\{0,1\}} \cdot \oli{\{0,1\}}} \cap
    \oli{\{0,1,2\}}\right) \cdot \oli{\{0\} \cap \{1\}}}$
    generates the set $\{2^i ~|~ i \in \N\}$.

    For every term $t$, let $t'$ be the term that is obtained from $t$ if
    every constant $c$ is replaced with $2^c$,
    every $+$ operation is replaced with $\times$, and
    every complement operation $\oli{s}$ is replaced with $(\oli{s} \cap q)$.
    The computation of $t'$ is possible in polynomial time,
    since only the constants $0$ and $1$ can appear.

    The reduction outputs the $\CSP{\com,\cup,\cap,\times}$-instance
    $$x' := \exists y \in \N^n \bigwedge_{i=0}^m (t'_{2i}=t'_{2i+1}) \wedge
    \bigwedge_{i=1}^n (y_i \cup q = q).$$

    Observe that for all terms $t$ and all $e = (e_1, \ldots, e_n) \in \N^n$
    it holds that
    \begin{equation} \label{eqn_8239}
        t'(2^{e_1}, \ldots, 2^{e_n}) = \{ 2^i ~|~ i \in t(e_1, \ldots, e_n)\}.
    \end{equation}

    It remains to show that $x$ and $x'$ are equivalent.

    If $e = (e_1, \ldots, e_n) \in \N^n$ is a satisfying assignment for $x$,
    then by equation~(\ref{eqn_8239}),
    $z = (2^{e_1}, \ldots, 2^{e_n})$ is a satisfying assignment for $x'$
    (note that $\bigwedge_{i=1}^n (y_i \cup q = q)$ holds,
    since $y_i = 2^{e_i} \in q$).

    If $z = (z_1, \ldots, z_n) \in \N^n$ is a satisfying assignment for $x'$,
    then because of the constraints $\bigwedge_{i=1}^n (y_i \cup q = q)$,
    $z_1 = 2^{e_1}, \ldots, z_n = 2^{e_n}$ for $e = (e_1, \ldots, e_n) \in \N^n$
    and by (\ref{eqn_8239}), $e$ is a satisfying assignment for $x$.
\end{proof}


The following propositions transfer the $\cNP$-hardness
from satisfiability problems for arithmetic circuits to
$\CSP{\times}$ and $\CSP{+}$.

\begin{proposition} \label{prop_6286}
    $\CSP{\times}$ is $\redlogm$-hard for $\cNP$.
\end{proposition}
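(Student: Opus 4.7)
The plan is to give a logspace many-one reduction from Subset Sum, a well-known $\cNP$-complete problem, to $\CSP{\times}$. Given an instance $(a_1,\dots,a_n,T)$, the goal is to build a primitive positive $\times$-sentence over $\{\N\}$ that is satisfiable iff some $S\subseteq[n]$ has $\sum_{i\in S} a_i = T$.

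The core gadget encodes a Boolean choice multiplicatively. For each $i$, introduce fresh variables $x_i,y_i$ together with the equation $x_i\cdot y_i = 2$. Since $2$ is prime, the only solutions in $\N$ are $(x_i,y_i)\in\{(1,2),(2,1)\}$, so writing $x_i = 2^{b_i}$ gives $b_i\in\{0,1\}$ which plays the role of the indicator ``$i\in S$''.

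Subset Sum is then captured by the single multiplicative equation $\prod_i x_i^{a_i} = 2^T$, which by unique factorisation is equivalent to $\sum_i a_i b_i = T$. Because the exponents $a_i$ and $T$ may be exponentially large in the input length, I cannot write these powers out as flat products. The standard remedy is repeated squaring: for each $i$, introduce a chain of auxiliary variables $z_{i,0},z_{i,1},\dots$ with $z_{i,0}=x_i$ and $z_{i,k}=z_{i,k-1}\cdot z_{i,k-1}$, so that $z_{i,k}=x_i^{2^k}$, and realise $x_i^{a_i}$ as the product of the $z_{i,k}$'s selected by the binary expansion of $a_i$. The constant $2^T$ is produced in the same way from the input-gate constant $2$, and the final $\prod_i x_i^{a_i}$ is assembled by a balanced tree of $\times$-equations. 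This yields $O(n+\sum_i\log a_i+\log T)$ equations, and the whole translation is plainly $\redlogm$.

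The only real obstacle is routine bookkeeping: verifying that $\CSP{\times}$ permits $\N$-valued constants (implicit in the paper's convention that such constants may label input gates), and carrying out the logspace construction of the chained squaring equations. Correctness is then immediate: any satisfying assignment forces each $x_i\in\{1,2\}$ by the prime equation, and the top equation precisely expresses $\sum_i a_i b_i = T$; conversely, any subset realising the target sum yields a satisfying assignment.
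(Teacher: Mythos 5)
Your proof is correct, but it takes a genuinely different route from the paper. The paper reduces from $\tn{3SAT}$ via the circuit satisfiability problem $\SC{\{\cap,\times\}}$, citing the known fact that $\tn{3SAT} \redlogm \SC{\{\cap,\times\}}$ with the additional property that the produced circuits are connected; it then simulates such a circuit gate-by-gate with one CSP variable per gate ($\cap$-gates become two equalities, $\times$-gates become products, assigned inputs and the output gate become constant constraints), connectedness guaranteeing that every gate computes a singleton. Your reduction is instead direct from Subset Sum: the prime-factorisation gadget $x_i \cdot y_i = 2$ forces $x_i \in \{1,2\}$, and repeated squaring realises $\prod_i x_i^{a_i} = 2^T$ with polynomially many equations despite the binary encoding of the $a_i$ and $T$; unique factorisation then makes the top equation equivalent to $\sum_i a_i b_i = T$. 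Both arguments are sound and both yield logspace reductions. What yours buys is self-containedness (no reliance on \cite{grtw10} or on the connectedness property of that particular reduction) and the sharper observation that hardness already holds for instances whose only constant is $2$; what the paper's approach buys is brevity and reusable machinery, since essentially the same gate-simulation gives Proposition~\ref{prop_6287} for $\CSP{+}$ at no extra cost. One small point you should make explicit: since the proposition asserts $\redlogm$-hardness, you need Subset Sum to be $\cNP$-hard under logspace (not merely polynomial-time) many-one reductions; this is standard, as the usual chain of reductions from Cook--Levin through $\tn{3SAT}$ to Subset Sum is logspace computable and $\redlogm$ is transitive, but it deserves a sentence.
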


\begin{proposition} \label{prop_6287}
    $\CSP{+}$ is $\redlogm$-hard for $\cNP$.
\end{proposition}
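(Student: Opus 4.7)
The plan is a direct logspace reduction from 3SAT to $\CSP{+}$. Since CSP-variables here range over $\mathbb{N}$ (via singletons) and the only function symbol is $+$, the key gadgets I will use are the Boolean gadget $y+z=1$, which pins $(y,z)$ to $\{(0,1),(1,0)\}$, and the slack-variable trick $t = 1+s$ for a fresh $s$, which expresses ``$t\ge 1$'' (for $t$ non-negative).

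Concretely, on a 3SAT formula with propositional variables $x_1,\ldots,x_n$ and clauses $C_1,\ldots,C_m$, I introduce CSP-variables $u_i$ and $\overline{u}_i$ for each $i$ together with the equality $u_i + \overline{u}_i = 1$, forcing $(u_i,\overline{u}_i)\in\{(0,1),(1,0)\}$, which I read as a truth assignment to $x_i$ and its negation. For each clause $C_j$ with literals $\ell_{j,1},\ell_{j,2},\ell_{j,3}$, I introduce a fresh slack variable $s_j$ and the equality $\mu_{j,1} + \mu_{j,2} + \mu_{j,3} = 1 + s_j$, where $\mu_{j,k}$ is the CSP-variable $u_i$ if $\ell_{j,k}=x_i$ and $\overline{u}_i$ if $\ell_{j,k}=\neg x_i$. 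Because each $\mu_{j,k}\in\{0,1\}$, this equation is satisfiable in $\mathbb{N}$ exactly when the left-hand sum is at least $1$, i.e.\ when at least one literal of $C_j$ is true (in which case $s_j\in\{0,1,2\}$ is uniquely determined). The resulting primitive positive sentence uses only $+$ and the constant $1$.

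Correctness is then immediate: a satisfying truth assignment extends to a satisfying CSP-assignment by setting each $s_j$ to the number of true literals in $C_j$ minus one, and conversely any CSP-solution projects to a satisfying truth assignment. The construction is clearly computable in logarithmic space, yielding the desired $\redlogm$-reduction.

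The one step that deserves a sanity check, rather than a genuine obstacle, is that the constant $1$ really is admissible in the signature of $\CSP{+}$. This is implicit in the paper's framework --- compare Proposition~\ref{prop_5147}, where $\O$-formulas are built with arbitrary $\mathbb{N}$-constants, as well as the constant-manipulations in the proof of Corollary~\ref{coro_9843} --- and in any case, without constants, $\CSP{+}$ would be trivialised by the all-zero assignment. I do not anticipate a deeper difficulty; the Boolean-plus-slack encoding above appears to be the cleanest choice of gadgets.
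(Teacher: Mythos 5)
Your proof is correct, but it takes a genuinely different route from the paper. The paper does not reduce from 3SAT directly: it observes that $\SC{\{+\}}$ (satisfiability of $\{+\}$-circuits) is already known to be $\cNP$-hard from Gla{\ss}er et al., and then gives a logspace reduction $\SC{\{+\}} \redlogm \CSP{+}$ by introducing one CSP-variable per gate and one equation per gate, exactly as in the proof of Proposition~\ref{prop_6286} for $\times$ (in fact simpler, since without $\cap$-gates no connectedness assumption on the circuit is needed). Your argument instead builds the hardness from scratch: the pair gadget $u_i + \overline{u}_i = 1$ pins each pair to $\{(0,1),(1,0)\}$ over $\mathbb{N}$, and the slack equation $\mu_{j,1}+\mu_{j,2}+\mu_{j,3} = 1 + s_j$ is satisfiable in $\mathbb{N}$ precisely when some literal of the clause is true; both directions of correctness and the logspace bound go through as you say. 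What the paper's route buys is uniformity and brevity: it recycles the circuit-to-CSP translation used for $\CSP{\times}$ and leans on the existing circuit-satisfiability hardness results, which is the theme of that section (transferring known bounds on $\MC{\O}$, $\EC{\O}$, $\SC{\O}$ to CSPs). What your route buys is self-containedness: it needs no external result about $\SC{\{+\}}$, and it makes explicit that hardness already holds for instances whose only constant is $1$. Your closing sanity check is also well taken and resolves correctly: constants are admissible in $\CSP{\O}$-instances in this framework (input gates of $\O$-formulas carry labels from $\N$, and the paper's own reductions, e.g.\ in Propositions~\ref{prop_5147} and~\ref{prop_6286}, write constraints of the form $g=k$), and indeed without constants $\CSP{+}$ would be trivially satisfiable by the all-zero assignment, so the statement would be vacuous.
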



The remaining results in this section show
that certain functional CSPs belong to $\cNP$.
This needs non-trivial arguments of the form:
If a CSP can be satisfied, then even with small values.
These arguments are provided by the known results that
integer programs, existential Presburger arithmetic,
and existential Skolem arithmetic are decidable in $\cNP$
\cite{bt76,pap81,Sca84,gra89}.

\begin{proposition} \label{prop_12481}
    $\CSP{\com,\cap,\cup}$ is $\redlogm$-complete for $\cNP$.
\end{proposition}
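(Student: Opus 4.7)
The plan is to establish $\cNP$-hardness and $\cNP$-membership separately for $\CSP{\com,\cap,\cup}$.

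For hardness, I would reduce from 3-SAT in logspace. Given an instance with Boolean variables $v_1, \ldots, v_N$ and clauses $C_1, \ldots, C_M$, introduce singleton-valued CSP variables $y_1, \ldots, y_N$, and for each $i$ add the constraint $y_i \cap (\{0\} \cup \{1\}) = y_i$, which forces $y_i \in \{\{0\}, \{1\}\}$. For each clause $C_j = l_{j,1} \vee l_{j,2} \vee l_{j,3}$, let $S_{j,k} := \{1\}$ if $l_{j,k}$ is positive and $S_{j,k} := \{0\}$ otherwise, let $\sigma(j,k)$ be the variable underlying $l_{j,k}$, and build $P_j := \bigcup_{k=1}^{3} (y_{\sigma(j,k)} \cap S_{j,k})$. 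By design $P_j \neq \emptyset$ iff $C_j$ is satisfied. To force this, introduce a fresh singleton CSP variable $z_j$ and add $z_j \cup P_j = P_j$, equivalent to $z_j \subseteq P_j$; since $z_j$ is a non-empty singleton, this holds iff $P_j \neq \emptyset$. The whole construction is logspace.

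For membership in $\cNP$, the key observation is that every term formed from singleton variables $y_1, \ldots, y_n$ and singleton constants $c_1, \ldots, c_m$ by $\com, \cap, \cup$ is a Boolean combination of singletons; whether $z \in \N$ lies in the resulting set depends only on the equality type of $z$ relative to $\{y_1, \ldots, y_n, c_1, \ldots, c_m\}$. Hence each CSP equation $t_i = t_i'$ decomposes into finitely many conditions, one per equality class plus a ``generic other'' class of elements distinct from all variables and constants. My plan is to nondeterministically guess an equivalence relation on the variables and constants (respecting that distinct constants lie in distinct classes), then verify in polynomial time that every equation evaluates equally on both sides in each class. Because $\N$ is infinite, any such guess is realizable by a concrete assignment, so the instance is satisfiable iff some guess succeeds.

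The step needing most care is correctly handling the complement on the ``generic other'' class: an element $z$ differing from every $y_i$ and $c_j$ lies in $\com\{y_i\}$ but not in $\{y_i\}$, so that generic slot must be tracked alongside the named classes when checking equations. Once this single bookkeeping item is in place, both the hardness reduction and the $\cNP$-membership verification are entirely routine, and the proposition follows.
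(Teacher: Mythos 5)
Your proof is correct, but both halves take a genuinely different route from the paper's. For hardness, the paper also reduces from 3SAT, but it translates the whole 3CNF formula into a single term --- replacing $\neg,\wedge,\vee$ by $\com,\cap,\cup$ --- and uses the lone constraint $t' \cap \{1\} = \{1\}$, handling the backward direction by collapsing all values greater than $1$ to $0$; your clause-by-clause gadget with the auxiliary singletons $z_j$ instead forces the Boolean domain explicitly via $y_i \cap (\{0\}\cup\{1\}) = y_i$ and never uses complement at all, so it in fact establishes the formally stronger statement that $\CSP{\cap,\cup}$ is already $\cNP$-hard. For membership, the paper proves a small-witness property: any satisfiable instance has a solution with all values in $\{0,\ldots,n-1\}$, shown by an exchange argument (take a solution minimizing the largest value $a'$, replace every occurrence of $a'$ by the least unused value $b'$, and verify that membership patterns in all terms are merely permuted), after which one guesses and checks such a bounded assignment. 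You instead guess the equality type --- an equivalence relation on variables and constants --- and verify each equation symbolically on every class plus the generic class, invoking the infiniteness of $\N$ for realizability. Both arguments rest on the same underlying fact, namely that $\{\com,\cap,\cup\}$-terms are only sensitive to the equality pattern of the values involved, but your version handles constants in the instance more transparently (the paper's choice of $b'$ does not guard against collision with a constant occurring in $\varphi$, a detail your generic-class bookkeeping sidesteps automatically), whereas the paper's version buys an explicit numeric bound on witness values.
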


\begin{proposition} \label{prop_98127}
    $\CSP{+} \in \cNP$.
\end{proposition}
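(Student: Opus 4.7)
The plan is to reduce a $\CSP{+}$-instance to the feasibility of a system of linear Diophantine equations over $\mathbb{N}$, and then to invoke the classical small-solution bound for such systems. Since every variable ranges over the singletons $\{n\}$ with $n \in \mathbb{N}$ and set-wise addition preserves singletons, every term in the input evaluates to a singleton whose unique element depends linearly on the variables. Concretely, each term $t_j$ can be read off as a non-negative integer expression $\sum_k c_{j,k} y_k + d_j$, where $c_{j,k}$ counts the occurrences of $y_k$ in $t_j$ and $d_j$ is the sum of the integer constants labelling the leaves of $t_j$.

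Each equality $t_{2i} = t_{2i+1}$ then becomes a linear equation $\sum_k (c_{2i,k} - c_{2i+1,k}) y_k = d_{2i+1} - d_{2i}$ over $\mathbb{Z}$, so the whole instance is turned, in polynomial time, into a linear Diophantine system $Ay = b$ for which a solution $y \in \mathbb{N}^n$ is sought. Because the coefficients $c_{j,k}$ are bounded by the term size, and each $d_j$ is a sum of polynomially many binary-encoded constants, both $A$ and $b$ have entries of polynomial bit-length in the original input.

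Now I invoke the classical theorem of Papadimitriou (one of the results the paper attributes to \cite{bt76,pap81}): if such a linear system has any solution in $\mathbb{N}^n$, then it has one whose components admit polynomial-size binary encodings. A non-deterministic polynomial-time machine therefore guesses such a witness $y$ and verifies each equality in polynomial time, establishing $\CSP{+} \in \cNP$.

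The only technical point that requires care is the book-keeping of sizes during the linearisation: one must confirm that coefficients, arising as occurrence counts, are polynomially bounded, and that the constants $d_j$, arising as sums of at most polynomially many binary-encoded integers, also keep polynomial bit-length. Once this is done the small-solution bound applies verbatim, so the main obstacle is not any algorithmic subtlety but rather identifying the right reduction target and citing the right integer-programming result; the rest is routine.
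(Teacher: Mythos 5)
Your proposal is correct and follows essentially the same route as the paper's own proof: both linearise the term equalities into a linear Diophantine system over $\mathbb{N}$ and invoke the small-solution bounds of Borosh--Treybig and Papadimitriou to place the problem in $\cNP$. Your write-up is in fact slightly more careful than the paper's, since you handle the additive constants and the bit-length book-keeping explicitly, whereas the paper folds these details into the matrix formulation $Ax=0$ without comment.
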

\begin{proof}
    Consider a $\CSP{+}$-instance
    $\varphi := \exists x_1, \ldots, x_n
    [s_1 = t_1 \wedge \cdots \wedge s_m = t_m]$.
    Each atom $s_i=t_i$ term can be written as
    $0 = t_i-s_i = a_{i,1} x_1 + \cdots + a_{i,n} x_n$ where $a_{i,j} \in \Z$.
    Let 
    $$A = 
    \left(
    \begin{matrix}
        a_{1,1} & \cdots & a_{1,m}\\
        \vdots & & \vdots \\
        a_{n,1} & \cdots & a_{n,m}
    \end{matrix}
    \right).
    $$
    Hence $\varphi \in \CSP{+}$ if and only if
    there exists an $x = (x_1, \ldots, x_n) \in \N^n$ such that $Ax = 0$.
    The right-hand side is an integer program that can be solved in $\cNP$
    \cite{bt76,pap81}.
\end{proof}

\begin{proposition} \label{prop_981281} \
    \begin{enumerate}
        \item $\CSP{\cap,+} \redmNP \CSP{+,=,\neq}$.
        \item $\CSP{\cap,\times} \redmNP \CSP{\times,=,\neq}$.
    \end{enumerate}
\end{proposition}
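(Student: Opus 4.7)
The plan is to exploit that every subterm of a $\CSP{\cap,+}$- or $\CSP{\cap,\times}$-instance evaluates under any assignment to either a singleton in $\{\N\}$ or to $\emptyset$. This is immediate by induction on term structure: variables and constants are singletons; a $+$- or $\times$-node with children $a,b$ outputs a singleton iff both $a,b$ do; and a $\cap$-node outputs a singleton iff both children are singletons with the same underlying element. The main obstacle, and the reason nondeterminism is essential, is the \emph{disjunctive} character of the last condition: ``$\cap$ outputs $\emptyset$'' means ``one child is $\emptyset$ or both are singletons with distinct elements,'' which cannot be coded by a single $+$- or $\neq$-atom. We finesse this by pre-guessing, for each subterm, which of the two situations it is in.

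Given an input $\varphi := \exists y_1,\ldots,y_n \bigwedge_{i=1}^{m}(s_i = t_i)$, the reduction nondeterministically labels every node $g$ of every term tree of $\varphi$ with a status $\tau(g)\in\{\text{sing},\text{emp}\}$ and, for each $\text{sing}$-node, fixes a first-order variable $\mathrm{val}(g)$ (identifying $\mathrm{val}(y_k)$ across all occurrences of $y_k$ and setting $\mathrm{val}(g) := c$ for a constant node $\{c\}$). It then emits constraints as follows. For a $+$-node $g$ with children $a,b$: if $\tau(g)=\text{sing}$, insist $\tau(a)=\tau(b)=\text{sing}$ and add $\mathrm{val}(g) = \mathrm{val}(a) + \mathrm{val}(b)$; if $\tau(g)=\text{emp}$, insist that $\tau(a)=\text{emp}$ or $\tau(b)=\text{emp}$ (a purely structural check on the guess, emitting no atom). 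For a $\cap$-node $g$ with children $a,b$: if $\tau(g)=\text{sing}$, insist $\tau(a)=\tau(b)=\text{sing}$ and add the atoms $\mathrm{val}(a)=\mathrm{val}(b)$ and $\mathrm{val}(g)=\mathrm{val}(a)$; if $\tau(g)=\text{emp}$ and both children are $\text{sing}$, add $\mathrm{val}(a)\neq\mathrm{val}(b)$, otherwise emit no atom. Finally, for each top-level atom $s_i = t_i$: reject the branch unless $\tau(s_i)=\tau(t_i)$; in the $\text{sing}$-case add $\mathrm{val}(s_i)=\mathrm{val}(t_i)$; in the $\text{emp}$-case add nothing. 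For the multiplicative statement, swap $+$ for $\times$ throughout.

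Correctness is by a routine induction on term depth. For the forward direction, any satisfying assignment of $\varphi$ determines a canonical consistent $\tau$ and $\mathrm{val}$, yielding an accepting branch whose emitted $\CSP{+,=,\neq}$-instance $\varphi'$ (resp.\ $\CSP{\times,=,\neq}$) is satisfied by the induced values. For the backward direction, from any accepting branch together with a solution of $\varphi'$, one shows inductively that each subterm $g$ evaluates under the induced variable assignment to $\{\mathrm{val}(g)\}$ when $\tau(g)=\text{sing}$ and to $\emptyset$ when $\tau(g)=\text{emp}$, so that every atom $s_i = t_i$ of $\varphi$ is satisfied. The emission of $\varphi'$ from the guess is deterministic polynomial-time in $|\varphi|$, yielding the desired $\redmNP$-reduction in both parts.
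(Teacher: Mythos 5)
Your proof is correct, and its core idea coincides with the paper's: use the nondeterminism of the $\redmNP$-reduction to resolve, in advance, the disjunctive ``empty versus singleton'' dichotomy, and then express everything with $=$ and $\neq$ atoms. The implementations, however, differ genuinely. The paper guesses only at the granularity of atoms: for each atom $t_{2i}=t_{2i+1}$ it guesses whether both sides are equal singletons or both empty, and in the empty case it additionally guesses a single witnessing $\cap$-subterm on each side. It introduces no fresh variables at all; instead it uses the syntactic projection $t \mapsto t'$ that replaces every subterm $s_1 \cap s_2$ by $s_1$, and outputs $t'_{2i}=t'_{2i+1}$ together with $s'_1 = s'_2$ for every $\cap$-subterm (nonempty case), respectively deletes the atom and adds $u'_1 \neq u'_2$ and $u'_3 \neq u'_4$ for the guessed witnesses (empty case). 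Correctness there hinges on the observation that a term evaluating to a singleton coincides with its projection, which makes the backward direction slightly subtle: one must argue that disequality of the \emph{projections} of the witnessing subterms forces the original intersections to be empty. Your reduction instead guesses a sing/emp status for \emph{every} node of every term tree and flattens the terms with a fresh value variable per node. This costs more guessed bits and a larger output instance, but buys a completely local, gate-by-gate correctness induction with no projection argument, and it would apply verbatim to DAG-shaped (circuit-style) inputs rather than just term trees. Both constructions are legitimate polynomial-time nondeterministic many-one reductions, and your emitted instances consist only of equalities and disequalities over $+$-terms (respectively $\times$-terms), hence lie in $\CSP{+,=,\neq}$ (respectively $\CSP{\times,=,\neq}$) as required.
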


\begin{corollary} \label{coro_981281}
    $\CSP{\cap,+}, \CSP{\cap,\times} \in \cNP$.
\end{corollary}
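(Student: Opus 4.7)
The plan is to combine Proposition~\ref{prop_981281} with appropriate $\cNP$-membership results for the existential fragments of Presburger and Skolem arithmetic. Since $\cNP$ is closed under $\redmNP$, it suffices to establish that $\CSP{+,=,\neq}$ and $\CSP{\times,=,\neq}$ both lie in $\cNP$; then the reductions supplied by Proposition~\ref{prop_981281} immediately yield the corollary.

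For $\CSP{+,=,\neq}$, I would first observe that an instance is by construction a conjunction of atoms of the form $s = t$ and $s \neq t$ where $s,t$ are sums of variables and constants over $\N$. As in the proof of Proposition~\ref{prop_98127}, such an instance translates into an existentially quantified conjunction of linear equalities and linear disequalities over $\N$, i.e., an instance of the existential theory of $(\N;+,=,\neq)$. This is a standard extension of existential Presburger arithmetic, and it is well known that adding disequalities keeps the decision problem in $\cNP$: each disequality $s \neq t$ can be replaced nondeterministically by one of $s \ge t+1$ or $t \ge s+1$, after which the resulting conjunction of linear (in)equalities is an integer program whose feasibility is in $\cNP$ by \cite{bt76,pap81}. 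Thus $\CSP{+,=,\neq} \in \cNP$.

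For $\CSP{\times,=,\neq}$, the instance is an existentially quantified conjunction of atoms of the form $s = t$ and $s \neq t$ over $(\N;\times)$. This is precisely an instance of the existential fragment of Skolem arithmetic extended by disequalities, which by \cite{gra89} (and already implicit in \cite{Sca84}) is decidable in $\cNP$; the argument relies on small-model bounds for existential Skolem arithmetic, showing that if a solution exists, then one exists whose bit-length is polynomial in the input, and disequalities can be handled either directly or by the same branching trick applied exponent-wise. Hence $\CSP{\times,=,\neq} \in \cNP$.

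Combining these with Proposition~\ref{prop_981281} and closure of $\cNP$ under $\redmNP$ gives $\CSP{\cap,+}, \CSP{\cap,\times} \in \cNP$. The main obstacle, should one try to avoid citing \cite{gra89,Sca84}, is the $\cNP$-membership for the multiplicative case with disequalities, since one needs a genuine small-witness argument over factorisations (essentially an exponent-vector encoding followed by a Presburger-style bound) rather than a direct linear-algebraic one; for the purposes of this corollary we simply invoke the existing results.
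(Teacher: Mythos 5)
Your proposal is correct and follows essentially the same route as the paper: the paper's proof likewise observes that $\CSP{+,=,\neq}$- and $\CSP{\times,=,\neq}$-instances are formulas of existential Presburger and existential Skolem arithmetic, decidable in $\cNP$ by \cite{Sca84,gra89}, and then concludes via Proposition~\ref{prop_981281} and closure of $\cNP$ under $\redmNP$. Your extra detail on eliminating disequalities by branching on $s \ge t+1$ or $t \ge s+1$ is a harmless elaboration of what the paper simply delegates to the cited results.
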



\begin{center}

\begin{table}[hbt]
\renewcommand{\arraystretch}{1.3}
\[
\begin{array}{|@{\ }c@{\ }c@{\ }c@{\ }c@{\ }c@{\ }|cc|cc@{\ }|}
\hline
\multicolumn{5}{|c|}{} & \multicolumn{4}{|c|}{\tn{$\CSP{\O}$}} \\
\hline
\multicolumn{5}{|c|}{\O}& \multicolumn{2}{|c|}{\textrm{~~~~~~~~~~~~~~~Lower Bound~~~~~~~~~~~~~~}} & \multicolumn{2}{|c|}{\textrm{~~~~~~~~~~~~~~~Upper Bound~~~~~~~~~~~~~~}} \\
\hline
\com    & \cup  & \cap  & +     & \times    & \Sigma_1          & \tn{P\ref{prop_73478}}    & \Sigma_2          & \tn{P\ref{prop_67126}}    \\
\hline
\com    & \cup  & \cap  & +     &           & \cPSPACE          & \tn{C\ref{coro_5147}}     & \cEEEXPSPACE      & \tn{C\ref{coro_9843}} \\
\hline
\com    & \cup  & \cap  &       & \times    & \cPSPACE          & \tn{C\ref{coro_5147}}     & \cEEEXPSPACE      & \tn{C\ref{coro_76612}} \\
\hline
\com    & \cup  & \cap  &       &           & \cNP              & \tn{P\ref{prop_12481}}    & \cNP              & \tn{P\ref{prop_12481}}   \\
\hline
        & \cup  & \cap  & +     & \times    & \Sigma_1          & \tn{P\ref{prop_73478}}    & \Sigma_1          & \tn{P\ref{prop_83651}} \\
\hline
        & \cup  & \cap  & +     &           & \cPi{2}           & \tn{C\ref{coro_5147}}     & \cEEEXPSPACE      & \tn{C\ref{coro_9843}} \\
\hline
        & \cup  & \cap  &       & \times    & \cPi{2}           & \tn{C\ref{coro_5147}}     & \cEEEXPSPACE      & \tn{C\ref{coro_76612}} \\
\hline
        & \cup  &       & +     & \times    & \Sigma_1          & \tn{P\ref{prop_73478}}    & \Sigma_1          & \tn{P\ref{prop_83651}} \\
\hline
        & \cup  &       & +     &           & \cPi{2}           & \tn{C\ref{coro_5147}}     & \cEEEXPSPACE      & \tn{C\ref{coro_9843}} \\
\hline
        & \cup  &       &       & \times    & \cPi{2}           & \tn{C\ref{coro_5147}}     & \cEEEXPSPACE      & \tn{C\ref{coro_76612}} \\
\hline
        &       & \cap  & +     & \times    & \Sigma_1          & \tn{P\ref{prop_73478}}    & \Sigma_1          & \tn{P\ref{prop_83651}} \\
\hline
        &       & \cap  & +     &           & \cNP              & \tn{P\ref{prop_6287}}     & \cNP              & \tn{C\ref{coro_981281}}    \\
\hline
        &       & \cap  &       & \times    & \cNP              & \tn{P\ref{prop_6286}}     & \cNP              & \tn{C\ref{coro_981281}} \\
\hline
        &       &       & +     & \times    & \Sigma_1          & \tn{P\ref{prop_73478}}    & \Sigma_1          & \tn{P\ref{prop_83651}} \\
\hline
        &       &       & +     &           & \cNP              & \tn{P\ref{prop_6287}}     & \cNP              & \tn{P\ref{prop_98127}}    \\
\hline
        &       &       &       & \times    & \cNP              & \tn{P\ref{prop_6286}}     & \cNP              & \tn{C\ref{coro_981281}} \\
\hline
\end{array}
\]
\caption{Upper and lower bounds for $\CSP{\O}$.
All lower bounds are with respect to $\redlogm$-reductions.} \label{tablecsp}
\end{table}
\end{center}

\section{CSPs over fo-expansions of Skolem Arithmetic}
\label{sec:reducts-of-Skolem}

We now commence our exploration of the complexity of CSPs generated from the simplest expansions of $(\mathbb{N};\times)$. Abandoning our set-wise definitions, we henceforth use $\times$ to refer to the syntactic multiplication of Skolem Arithmetic (which may additionally carry semantic content). When we wish to refer to multiplication in a purely semantic way, we prefer $\cdot$s or $\prod$. We will consider $\times$ as a ternary relation rather than a binary function. We will never use syntactic $\times$ in a non-standard way, \mbox{i.e.} holding on a triple of integers for which it does not already hold in natural arithmetic.

\begin{proposition}
\label{prop:in-NP}
Let $\Gamma$ be a finite signature reduct of $(\mathbb{N};\times,1,2,\ldots)$. Then CSP$(\Gamma)$ is in NP.
\end{proposition}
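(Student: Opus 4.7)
The plan is to give an $\cNP$ algorithm for $\mathrm{CSP}(\Gamma)$ via a small-model property, the key ingredient being the $\cNP$ membership of existential Skolem arithmetic (and existential Presburger arithmetic) cited above. Let the finite signature of $\Gamma$ consist of relation symbols $R_1,\ldots,R_k$, each fixed and fo-definable in $(\mathbb{N};\times,1,2,\ldots)$ by a formula $\psi_{R_i}$. Given a pp-instance $\phi = \exists \bar{x}\bigwedge_j R_{i_j}(\bar{z}_j)$, substituting the definitions yields the equivalent Skolem-arithmetic fo-sentence $\phi' = \exists \bar{x}\bigwedge_j \psi_{R_{i_j}}(\bar{z}_j)$ in $(\mathbb{N};\times,1,2,\ldots)$.

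The structural ingredient is Mostowski's representation of $(\mathbb{N}^+,\times)$ as the weak direct power $\bigoplus_{p \text{ prime}}(\mathbb{N};+)$ via prime factorisation, $n \mapsto (v_p(n))_p$. Under this isomorphism, a Feferman--Vaught style decomposition rewrites each fixed $\psi_{R_i}$ as a Boolean combination of (a) Presburger conditions on exponents at the finitely many \emph{special} primes dividing constants appearing in $\psi_{R_i}$, and (b) cardinality statements of the form ``the set of primes $p$ for which the tuple $(v_p(z_1),\ldots,v_p(z_r))$ satisfies the Presburger formula $\chi$ has cardinality at least (resp.\ exactly) $N$''. Because $\Gamma$ is fixed, both the list of special primes and the finite collection of Presburger sub-formulas $\chi$ are constants independent of $\phi$.

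From this I would derive the small-model property: if $\phi'$ is satisfiable, then it has a witness in which each non-zero $x_i$ factorises over a pool of primes consisting of (i) the fixed special primes and (ii) at most polynomially many fresh \emph{witness} primes, each of polynomial bit-size (guaranteed to exist by Bertrand's postulate), with all exponents of polynomial bit-size. The polynomial bound on exponents follows from applying the $\cNP$ decision procedures for existential Presburger and existential Skolem arithmetic to the per-prime Presburger conditions arising from (a) and (b); the polynomial bound on the number of fresh witness primes follows from a pigeonhole argument, since only constantly many Presburger ``types'' $\chi$ and a constantly bounded cardinality threshold $N$ can occur per constraint, so one fresh prime per needed type-and-constraint suffices.

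The $\cNP$ algorithm is then a standard guess-and-verify: nondeterministically guess, for each $x_i$, either a zero flag or a prime factorisation (within the polynomial-size budget above), and verify each constraint $R_{i_j}(\bar{z}_j)$ in polynomial time by directly evaluating the fixed Feferman--Vaught decomposition of $\psi_{R_{i_j}}$ on the guessed data---noting that at primes outside the pool, the component Presburger sub-formulas evaluate on all-zero exponent-tuples and so contribute a fixed truth value. The main obstacle is the quantitative form of the decomposition: carrying out a sufficiently explicit Feferman--Vaught translation for $(\mathbb{N};\times,1,2,\ldots)$ and turning the NP bound on existential Skolem/Presburger arithmetic into explicit polynomial bounds both on exponent sizes and on the number and size of fresh witness primes needed to realise the cardinality conditions.
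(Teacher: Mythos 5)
Your route is genuinely different from the paper's, and the paper's is far shorter: it observes that Skolem Arithmetic admits quantifier elimination, so each of the finitely many (fixed) fo-definable relations of $\Gamma$ has a \emph{fixed quantifier-free} definition; substituting these into the pp-instance yields a polynomial-size sentence of \emph{existential} Skolem Arithmetic, whose truth is decidable in NP by Gr\"adel's result, cited as a black box. You instead substitute the quantified definitions and then try to re-derive a small-model property from scratch via the Mostowski/Feferman--Vaught weak-power decomposition. That strategy can be made to work, but as written it has a genuine gap at exactly the step you yourself flag as ``the main obstacle,'' namely the bound on the number of witness primes.

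Concretely, the claim that ``one fresh prime per needed type-and-constraint suffices'' fails in two ways. First, the Feferman--Vaught conditions are Boolean combinations of cardinality statements, so besides ``at least $N$'' they include negative and exact conditions (``at most $N-1$ primes of type $\chi$,'' ``exactly $c$''); meeting an ``at least $N$'' condition may require up to $N$ primes per (constraint, type) pair, and the negative conditions mean you cannot freely throw in extra primes at all. Second, and more seriously, primes cannot be allocated constraint-by-constraint: a prime carries an exponent row for \emph{all} variables, and constraints share variables, so a row introduced to realize a type for one constraint simultaneously realizes some type for every other constraint and may violate that constraint's negative or exact counting conditions. The repair is to argue by compression rather than construction: start from an actual satisfying assignment, select a subset of its support primes --- all primes contributing to any below-threshold count (there are fewer than $N_{\max}$ of these per pair) plus $N_{\max}$ primes for each above-threshold count --- which preserves every condition because subset-selection never increases counts; only then replace each kept row by a small-exponent row of the same type profile, which exists by small-solution bounds for the conjunction of the (fixed, quantifier-free after Presburger QE) type formulas. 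Your stated justification for the exponent bound (``applying the NP decision procedures \ldots to the per-prime Presburger conditions'') should be replaced by exactly this last point. With these repairs the argument goes through, but note that you are then essentially re-proving the NP membership of the relevant fragment of Skolem Arithmetic, which the paper's two-line proof simply cites.
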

\begin{proof}
It is known that Skolem Arithmetic admits quantifier-elimination and that the existential theory in this language is in NP \cite{Graedel89}. The result follows when one considers that we can substitute quantifier-free definitions for each among our finite set of fo-definable relations.
\end{proof}

\subsection{Upper bounds}

We continue with polynomial upper bounds. 
\begin{lemma} \label{lem:preprocessing}
Let $U \subseteq N$ be non-empty and $U \cap \{0,1\} = \emptyset.$
CSP$({\mathbb N};\times,U)$ is polynomial-time reducible to 
CSP$({\mathbb N}^+;\times,U)$.
\end{lemma}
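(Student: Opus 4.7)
Given a CSP$(\mathbb{N};\times,U)$-instance $\phi$ on variables $x_1,\ldots,x_n$, the plan is to compute in polynomial time a set $NZ^{*}\subseteq\{x_1,\ldots,x_n\}$ of variables that are \emph{forced to be nonzero} in every $\mathbb{N}$-satisfying assignment of $\phi$, and then to output $\phi'$ consisting of the conjunction of those atoms of $\phi$ all of whose variables lie in $NZ^{*}$, existentially quantified over $NZ^{*}$. The output is then read as a CSP$(\mathbb{N}^+;\times,U)$-instance.

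I would define $NZ^{*}$ as the smallest subset of $\{x_1,\ldots,x_n\}$ closed under the following three propagation rules, each an immediate consequence of arithmetic in $\mathbb{N}$: \emph{(i)} every variable occurring in a $U$-atom lies in $NZ^{*}$, using the hypothesis $0\notin U$; \emph{(ii)} for each atom $\times(x,y,z)$ of $\phi$, if $z\in NZ^{*}$ then $x,y\in NZ^{*}$ (the factors of a nonzero product are nonzero); \emph{(iii)} for each atom $\times(x,y,z)$ of $\phi$, if $x,y\in NZ^{*}$ then $z\in NZ^{*}$ (the product of nonzeros is nonzero). The closure is computable by a standard fixed-point iteration in polynomial time.

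The forward direction of correctness is straightforward: if $\alpha$ satisfies $\phi$ in $\mathbb{N}$, a routine induction on the stages of the closure shows $\alpha(x)\neq 0$ for every $x\in NZ^{*}$, so the restriction of $\alpha$ to $NZ^{*}$ lands in $\mathbb{N}^{+}$ and satisfies $\phi'$.

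The substance is the converse. Given $\beta:NZ^{*}\to\mathbb{N}^{+}$ satisfying $\phi'$, I would extend $\beta$ to $\widehat{\beta}$ on all $n$ variables by setting $\widehat{\beta}(x)=0$ for $x\notin NZ^{*}$, and check every atom of $\phi$. By~(i) every $U$-atom lives entirely in $NZ^{*}$ and is preserved; $\times$-atoms entirely within $NZ^{*}$ are preserved since $\beta$ satisfies them. The crucial case is a \emph{mixed} atom $\times(x,y,z)$ having at least one variable outside $NZ^{*}$: the contrapositive of~(iii) forces at least one of $x,y$ to lie outside $NZ^{*}$, so $\widehat{\beta}(x)\cdot\widehat{\beta}(y)=0$; and the contrapositive of~(ii) forces $z\notin NZ^{*}$, so $\widehat{\beta}(z)=0$, and the atom holds. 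The hard part is exactly this last step --- verifying that the rules (ii) and (iii) are strong enough that collapsing every variable outside $NZ^{*}$ to $0$ cannot violate any mixed constraint, which is precisely what makes the simple three-rule closure the right notion of \emph{forced nonzero}.
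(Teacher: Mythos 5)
Your proposal is correct and follows essentially the same route as the paper: the paper's set $V'$ is exactly your $NZ^{*}$, built by the identical three propagation rules, and its correctness argument likewise extends a positive solution by zero and checks mixed atoms via the contrapositives of the two multiplication rules. The only cosmetic difference is that the paper passes through an intermediate instance in which variables outside $V'$ are replaced by the constant $0$ before discarding the trivial constraints, whereas you discard them directly.
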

\begin{proof}
Let $\phi$ be an arbitrary instance of CSP$({\mathbb N};\times,U)$ involving a set of atoms $C$ on variables $V$. 
We construct a set $V' \subseteq V$ incrementally by repeating the following three steps
until a fixed point is reached.

\begin{itemize}
\item
if $U(v) \in C$, then $V':=V' \cup \{v\}$,

\item
if $(x \times y =z) \in C$ and $z \in V'$, then $V':=V' \cup \{x,y\}$, and

\item
if $(x \times y=z) \in C$ and $x,y \in V'$, then $V':=V' \cup \{z\}$.
\end{itemize}

Note that if $v \in V'$, then any solution to $\phi$ must satisfy $s(v) \neq 0$.
Let $V_0=V \setminus V'$. Construct $\phi'$, an instance for CSP$(\mathbb{N};\times,U,0)$, with atoms $C'$ and variables $V'$ by replacing each variable $v \in V_0$ with the constant $0$. Note the following:

\begin{enumerate}
\item
if $U(0) \in C'$, then the variable that was replaced by $0$ is a member of $V'$
so this case cannot occur.

\item
if $(0 \times y=z) \in C'$ or $(x \times 0=z) \in C'$, 
then the variable replaced by $0$ is not a member of $V'$ while $z$ is a member
of $V'$. This situation cannot occur.

\item
if $(x\times y=0) \in C'$, then note that $x,y \in V'$ so the variable
that was replaced with $0$ also was a member of $V'$. Hence, this case cannot
occur.
\end{enumerate}
Thus, $0$ can only appear in three cases: $(x\times 0=0)$,
$(0\times x=0)$, and $(0\times 0 = 0)$.
Let $\phi''$ be an instance for CSP$(\mathbb{N}^+;\times,U)$ built from atoms $C''$ and variables $V'$ where $C''$ is obtained from $C'$ by removing these kinds
of constraints. Note that $(\mathbb{N};\times,U,0) \models \phi'$ iff $(\mathbb{N};\times,U) \models \phi''$.
Also note that if $\phi''$ has a solution, then
it has a solution $s:V' \rightarrow {\mathbb N}^+$. 
This implies that $\phi'$ is satisfiable on $(\mathbb{N};\times,U)$ iff it is satisfiable on $(\mathbb{N}^+;\times,U)$

The transformation above can obviously be carried out in polynomial time. In order to prove
the lemma,
it remains to show that $(\mathbb{N};\times,U) \models \phi$ iff $(\mathbb{N};\times,U,0) \models \phi'$.

Assume first that $\phi$ has a solution $s:V \rightarrow {\mathbb N}$. Since $C'' \subseteq C$, it follows immediately that $s$ is a solution to $\phi''$, too.

Assume instead that $\phi''$ has a solution $s':V' \rightarrow {\mathbb N}^+$.
We claim that the function $s:V \rightarrow {\mathbb N}$ defined by
$s(v)=s'(v)$ for $v \in V'$ and $s(v)=0$ otherwise is a solution to $\phi$.
If the variable $v \in V$ appears in an atom $U(v)$, then $v \in V'$ and
$U(v)$ is satisfied by $s$. Consider an atom $(x\times y=z) \in C$.
If $\{x,y,z\} \subseteq V'$, then $s$ satisfies the atom since $(x\times y=z) \in C''$.
Assume $x \not\in V'$. Then $z \not\in V'$, $s(x)=s(z)=0$ and the atom is satisfied
by $s$. The same reasoning applies when $y \not\in V'$.
Assume finally that $z \not\in V'$. Then 
at least one of $x,y \not\in V'$ so
$s(x)=0$ and/or $s(y)=0$. Combining this with the fact that $s(z)=0$ implies that
the atom is satisfied by $s$.
\end{proof}

We now borrow the following slight simplification of Lemma 6 from \cite{JonssonLoow}.
\begin{lemma}[Scalability \cite{JonssonLoow}]
\label{lem:scalability}
Let $\Gamma$ be a finite signature constraint language with domain $\mathbb{R}$, whose relations are quantifier-free definable in $+, \leq$ and $<$, such that the following holds.
\begin{itemize}
\item Every satisfiable instance of CSP$(\Gamma)$ is satisfied by some rational point.
\item For each relation $R \in \Gamma$ , it holds that if $\overline{x}:=(x_1, x_2, \ldots, x_k) \in R$, then $(ax_1, ax_2, \ldots, ax_k) \in R$ for all $a \in \{y: y \in \mathbb{R},  y \geq 1\}$.
\item CSP$(\Gamma)$ is in P.
\end{itemize}
Then CSP$(\Delta)$ is in P, where $\Delta$ is obtained from $\Gamma$ by substituting the domain $\mathbb{R}$ by $\mathbb{Z}$.
\end{lemma}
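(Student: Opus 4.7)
The plan is to reduce CSP$(\Delta)$ to CSP$(\Gamma)$ by an essentially trivial syntactic map; since CSP$(\Gamma) \in \mathrm{P}$ by hypothesis~(iii), this suffices. Given an instance $\phi$ of CSP$(\Delta)$ with variables $x_1, \ldots, x_n$ intended to range over $\mathbb{Z}$, I would reinterpret the very same primitive positive sentence as an instance $\phi^*$ of CSP$(\Gamma)$ with variables ranging over $\mathbb{R}$, relying on the fact that each relation symbol $R$ is interpreted in $\Delta$ as $R^{\Gamma} \cap \mathbb{Z}^k$. No rewriting is performed, so the reduction is manifestly polynomial-time (indeed logspace).

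The substance of the argument is the correctness of this reduction, i.e.\ $\phi$ is satisfiable in $\Delta$ if and only if $\phi^*$ is satisfiable in $\Gamma$. The forward direction is immediate: an integer solution is a fortiori a real solution, since $\mathbb{Z} \subseteq \mathbb{R}$ and the relations of $\Delta$ are the $\mathbb{Z}$-restrictions of those of $\Gamma$. For the converse I would chain hypotheses (i) and (ii). Suppose $\phi^*$ is satisfiable over $\mathbb{R}$; by hypothesis~(i) it is satisfied by some rational tuple $q = (q_1, \ldots, q_n) \in \mathbb{Q}^n$. Writing each $q_i$ in lowest terms, let $a$ be the least common multiple of the denominators; then $a$ is a positive integer, $a \geq 1$, and $a q_i \in \mathbb{Z}$ for every $i$. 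Applying hypothesis~(ii) to each atom of $\phi^*$ with the single scalar $a$ shows that $a \cdot q \in \mathbb{Z}^n$ is still a solution, so $\phi$ is satisfiable in $\Delta$.

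The main—indeed the only—delicate point is that a single scaling factor must simultaneously clear every denominator and act on every atom. Both issues are handled at once by the observation that the instance has only finitely many variables, so a common multiple of their denominators exists, and hypothesis~(ii) is closed under taking the same real $a \geq 1$ throughout. I would therefore expect the actual writeup to be short: once the reduction $\phi \mapsto \phi^*$ is declared, everything reduces to verifying that the three hypotheses literally combine in the manner just sketched.
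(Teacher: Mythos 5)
Your proof is correct. Note that the paper itself gives no proof of this lemma---it is imported verbatim (as a ``slight simplification of Lemma 6'') from \cite{JonssonLoow}---and your argument, reducing CSP$(\Delta)$ to CSP$(\Gamma)$ via the identity map on instances and then scaling a rational solution by the least common multiple of the denominators so that hypothesis (ii) lands it in $\mathbb{Z}^n$, is precisely the standard scalability argument that underlies the cited result.
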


\begin{lemma} \label{lem:LPreduction}
Arbitrarily choose $m > 1$ and $U \subseteq {\mathbb N}^+$ such that
$\{m,m^2,m^3,\ldots\} \subseteq U$. Then, CSP$({\mathbb N}^+;\times,U)$ is in P.
\end{lemma}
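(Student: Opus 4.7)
The plan is to reduce the problem to a linear CSP over $\mathbb{R}$ via the substitution $s(v) = m^{x_v}$, and then invoke Lemma \ref{lem:scalability} to recover integer witnesses. The guiding intuition is that, because $\{m,m^2,\ldots\} \subseteq U$, one can try to restrict attention to assignments where every variable is a power $m^j$, whereupon the multiplicative atom $x \times y = z$ becomes the additive $x_x + x_y = x_z$ and the atom $U(v)$ becomes $x_v \geq 1$. The trivial case $1 \in U$ is dispatched immediately by assigning every variable the value $1$, since $1 \cdot 1 = 1$ and $1 \in U$.

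Assume henceforth that $1 \notin U$, and let $\phi$ be an instance with variable set $V$, multiplication atoms $E$, and $U$-constrained variables $V_U$. Construct $\phi'$ over variables $\{x_v : v \in V\}$ with constraints $x_v \geq 0$ for each $v \in V$, $x_x + x_y = x_z$ for each $(x,y,z) \in E$, and $x_v \geq 1$ for each $v \in V_U$. The key claim is that $\phi$ is satisfiable iff $\phi'$ is feasible. The ``if'' direction is routine: from an integer witness $(x_v)$ of $\phi'$, define $s(v) := m^{x_v}$; then $s(x) s(y) = m^{x_x + x_y} = m^{x_z} = s(z)$, and $s(v) \in \{m, m^2, \ldots\} \subseteq U$ whenever $v \in V_U$, since $x_v \geq 1$ is an integer. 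For the ``only if'' direction, given $s : V \to \mathbb{N}^+$ satisfying $\phi$, set $a := \max_{v \in V_U} 1/\log_m s(v)$ (well-defined, as $s(v) > 1$ for $v \in V_U$ because $1 \notin U$) and put $x_v := a \log_m s(v)$. Multiplicativity of $\log$ gives the linear equations; $x_v \geq 0$ follows from $s(v) \geq 1$; and for $v \in V_U$ we have $a \log_m s(v) \geq 1$ by the choice of $a$.

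The CSP $\phi'$ over $\mathbb{R}$ has relations quantifier-free definable from $+, \leq, <$, every feasible instance has a rational solution (LP polytopes have rational vertices), scaling by any $a \geq 1$ preserves each relation (the equality is homogeneous, while $\geq 0$ and $\geq 1$ are only strengthened), and feasibility reduces to linear programming and is thus in $\cP$. Lemma \ref{lem:scalability} then transports $\cP$-solvability to the integer version, so the LP $\phi'$ is decidable in polynomial time. I expect the main obstacle to be the ``only if'' direction: one might naively hope that $s$ can always be replaced by a solution mapping into $\{m^j : j \geq 1\}$, but members of $U$ may involve primes not dividing $m$, so such a direct replacement fails; the real-valued scaling $x_v = a \log_m s(v)$ sidesteps this cleanly, producing a real witness to $\phi'$ that the scalability lemma then integerizes.
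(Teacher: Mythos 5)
Your proof is correct, and it lands in the same place as the paper: both turn the instance into the identical additive system (multiplication atoms become $x_x+x_y=x_z$, $U$-constraints become $x_v\geq 1$, plus non-negativity) and both finish with linear programming via Lemma~\ref{lem:scalability}. The genuine difference lies in how the forward direction is witnessed, and here your closing intuition is wrong. The paper sets $h(1)=1$ and $h(x)=m^{\ell(x)}$ for $x>1$, where $\ell(x)$ is (implicitly) the number of prime factors of $x$ counted with multiplicity; since $\ell$ is completely additive and $\ell(x)\geq 1$ for every $x>1$, the map $h$ is a homomorphism from $(\mathbb{N}^+;\times,U)$ to $(\{1,m,m^2,\ldots\};\times,U\cap\{1,m,m^2,\ldots\})$, so $v\mapsto\ell(s(v))$ is already an \emph{integer} witness for the additive system. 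In other words, the ``naive hope'' you dismiss---replacing $s$ by a solution valued in the powers of $m$---does succeed: primes of $s(v)$ coprime to $m$ are harmless, because one counts all prime factors rather than extracting the $m$-part. What your real-valued witness $x_v=a\log_m s(v)$ costs is an extra step: correctness of your reduction to the \emph{integer} CSP needs real-feasible $\Rightarrow$ rational-feasible $\Rightarrow$ integer-feasible, and this chain follows from the two hypotheses of Lemma~\ref{lem:scalability} that you verify (rational solvability; invariance under scaling by $a\geq 1$, applied with $a$ the least common multiple of the denominators), not from the lemma's statement, which asserts only P-membership of the integer CSP---so your phrase ``the scalability lemma then integerizes'' should be unpacked into exactly that one-line argument. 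Two trifles: set $a:=1$ when no variable carries a $U$-constraint (otherwise your maximum ranges over the empty set), and note that your explicit treatment of $1\in U$ is in fact more careful than the paper's, whose second map $h'$ fails to preserve the unary relation when $1\in U$ (harmless there, since the lemma is only ever applied with $U\cap\{0,1\}=\emptyset$).
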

\begin{proof}
Define $h(1)=1$ and $h(x)=m^{\ell(x)}$ when $x > 1$. Let $D=\{1,m,m^2,m^3,\dots\}$.
The function $h$ is a homomorphism from $({\mathbb N}^+;\times,U)$ to 
$(D;\times,U \cap D)$. 
Clearly, $h(U) = U \cap D$.
Suppose $a \cdot b=c$ where $a,b,c \in {\mathbb N}^+$. 
We see that

\[h(a) \cdot h(b)=m^{\ell(a)} \cdot m^{\ell(b)} = m^{\ell(a) + \ell(b)} = m^{\ell(a)+\ell(b)}
=m^{\ell(c)} = h(c).\]

Define $h'(1)=0$ and $h'(m^k)=k$. Note that $h'$ is a homomorphism
from $(D;\times,U \cap D)$ to $({\mathbb N};+,x\geq 1)$.
We know that that CSP$({\mathbb R};+,x \geq 0,x \geq 1)$
is in P (via linear programming) and this implies tractability of
CSP$({\mathbb N};+,x \geq 1)$ through CSP$({\mathbb Z};+,x \geq 1, x \geq 0)$ by Lemma~\ref{lem:scalability}.
\end{proof}

\begin{proposition}
\label{prop:P}
Arbitrarily choose $m > 1$ and $U \subseteq {\mathbb N}$ such that
$\{m,m^2,m^3,\ldots\}$ $\subseteq U$. Then, CSP$({\mathbb N};\times,U)$ is in P.
\end{proposition}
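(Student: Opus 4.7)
The plan is to dispose of this proposition by reducing to Lemma~\ref{lem:preprocessing} and Lemma~\ref{lem:LPreduction} after a short case analysis on whether $0$ or $1$ lies in $U$. The first two cases will be trivial, and the third will follow at once by chaining the two cited lemmas.

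First I would handle the case $1 \in U$: here the constant assignment sending every variable to $1$ satisfies any pp-instance, since every atom $U(v)$ is satisfied by $1 \in U$ and every atom $\times(x,y,z)$ is satisfied because $1 \cdot 1 = 1$ (using the convention stated at the start of Section~\ref{sec:reducts-of-Skolem} that $\times$, viewed as a ternary relation, is exactly the graph of natural multiplication). Hence every instance is a yes-instance and CSP$(\mathbb{N};\times,U)$ is trivially in P. The case $0 \in U$ is entirely symmetric: the constant assignment $v \mapsto 0$ satisfies every atom, using $U(0)$ and $\times(0,0,0)$.

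In the remaining case $U \cap \{0,1\} = \emptyset$, the set $U$ is non-empty because it contains $m$, so Lemma~\ref{lem:preprocessing} yields a polynomial-time reduction from CSP$(\mathbb{N};\times,U)$ to CSP$(\mathbb{N}^+;\times,U)$. Since $U \subseteq \mathbb{N}^+$ and $\{m,m^2,m^3,\ldots\} \subseteq U$, this target problem is exactly in the setting of Lemma~\ref{lem:LPreduction}, which furnishes a polynomial-time algorithm. Composing the two reductions gives the desired polynomial-time algorithm for CSP$(\mathbb{N};\times,U)$. There is no genuine obstacle in this proof; the entire argument is a case split plus two applications of already-proved lemmas, and the only point requiring a moment's care is verifying that the trivial constant assignments respect the ternary relation $\times$, which is immediate from the paper's convention.
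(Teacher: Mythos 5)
Your proof is correct and follows essentially the same route as the paper, whose entire proof is ``Combine Lemma~\ref{lem:preprocessing} with Lemma~\ref{lem:LPreduction}.'' Your explicit case analysis for $0 \in U$ or $1 \in U$ (handled by trivial constant assignments) is a welcome extra step of care, since Lemma~\ref{lem:preprocessing} formally requires $U \cap \{0,1\} = \emptyset$ --- a hypothesis the paper's one-line proof silently elides, noting only in the introduction that these cases are trivial.
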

\begin{proof}
Combine Lemma~\ref{lem:preprocessing} with Lemma~\ref{lem:LPreduction}.
\end{proof}

\subsection{Cores}

We say that an integer $m>1$ has a {\em degree-one factor} $p$ if and only if
$p$ is a prime such that $p | m$ and $p^2 \not| \; m$. Let $\mathrm{Div}_m$ be the set of divisors of $m$, pp-definable in $(\mathbb{N};\times,m)$ by $\exists y \ x \times y=m$. We can pp-define the relation $\{1\}$ in $(\mathrm{Div}_m; \times, m)$ since $x=1$ iff $x \times x=x$ (recalling $0 \notin \mathrm{Div}_m$). It follows that $\{1,m\}$ are contained in the core of $(\mathrm{Div}_m; \times, m)$.

\begin{lemma} 
Let $m > 1$ be an integer that has a degree-one factor $p$.
Then $(\mathrm{Div}_m; \times,m)$ has a two-element core.
\label{lem:2el-core}
\end{lemma}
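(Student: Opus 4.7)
My plan is to exhibit an explicit endomorphism of $(\mathrm{Div}_m;\times,m)$ whose image is exactly $\{1,m\}$; combined with the preceding observation that $\{1,m\}$ is already contained in the core, this forces the core to have exactly two elements.

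Since $p$ is a degree-one factor of $m$, we can write $m = p \cdot n$ with $\gcd(p,n)=1$. Every divisor of $m$ then has the form $d$ or $p \cdot d$ with $d \mid n$, and crucially $p^2 \nmid m'$ for every $m' \in \mathrm{Div}_m$. Define
\[ h(x) = \begin{cases} m & \text{if } p \mid x, \\ 1 & \text{otherwise.} \end{cases} \]
Note $h(1)=1$ and $h(m)=m$, so $h$ preserves the constant $m$ and its restriction to $\{1,m\}$ is the identity.

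The main (and really only) step is to verify that $h$ preserves the ternary relation $\times$. Suppose $x,y,z \in \mathrm{Div}_m$ with $x \cdot y = z$. There are four cases according to whether $p$ divides $x$ and $y$. The key observation is that the case $p \mid x$ and $p \mid y$ cannot occur, because then $p^2 \mid z$ would force $p^2 \mid m$, contradicting the degree-one assumption. In each of the other three cases one checks directly that $p \mid z$ iff $p$ divides exactly one of $x,y$, and hence $h(x)\cdot h(y) = h(z)$ (using $1\cdot 1 = 1$ and $1\cdot m = m$).

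This shows $h$ is an endomorphism of $(\mathrm{Div}_m;\times,m)$ with image $\{1,m\}$, so the core has at most two elements. Together with the pp-definability observation stated just before the lemma (which places $1$ and $m$ in the core), the core is exactly the substructure on $\{1,m\}$. The only real obstacle is the homomorphism check, and that is exactly where the hypothesis $p^2 \nmid m$ is used — without it, the case $p\mid x, p\mid y$ would produce a $z$ with $h(z)=m$ while $h(x)\cdot h(y)=m\cdot m$ lies outside $\mathrm{Div}_m$.
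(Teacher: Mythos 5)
Your proof is correct and takes essentially the same route as the paper: the paper's endomorphism $e$, defined on prime divisors by $e(p)=m$ and $e(q)=1$ for all other primes $q$ and extended multiplicatively, is precisely your map $h$ (send multiples of $p$ to $m$, everything else to $1$), and both verifications rest on the same key point that $p$ cannot divide both factors of a triple in the relation since $p^2 \nmid m$. No gaps; nothing further is needed.
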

\begin{proof}
Consider the function $e:\mathrm{Div}_m \rightarrow \mathrm{Div}_m$ uniquely defined by
$e(1)=1$, $e(p)=m$, $e(p_1)=\dots=e(p_k)=1$ (\mbox{i.e.} all the other prime divisors map to $1$), and the rule $e(x \cdot x')=e(x) \cdot e(x')$.
We claim that $e$ is an endomorphism of $(\mathrm{Div}(m); \times,m)$. Clearly, $e(m)=m$.
Arbitrarily choose a tuple $(x,y,z) \in (x \times y=z)$. 
Let $x=x_1^{\alpha_1} \cdot \ldots \cdot x_a^{\alpha_a}$ and
$y=y_1^{\beta_1} \cdot \ldots \cdot y_b^{\beta_b}$ be prime factorizations.
Note that at most one of $x_1,\ldots,x_a,y_1,\ldots,y_b$ can equal $p$ and, if so, the
corresponding exponent must equal one.
If none of the factors equal $p$, then
$e(x)=e(y)=e(z)=1$ and $e(x) \times e(y) = e(z)$.
Otherwise, assume without loss of generality that $x_1=p$.
Then we have $e(x)=e(z)=m$ and $e(y)=1$. Once again $e(x) \times e(y) = e(z)$
and $e$ is indeed an endomorphism of $(\mathrm{Div}_m; \times,m)$. 
It follows that $(\{1,m\};\times,m)$ is the core of $(\mathrm{Div}_m; \times,m)$.
\end{proof}

\begin{lemma}
Let $m$ be an integer
that does not have a degree-one factor and
let $D$ contain the divisors of $m$.
Then  $(\mathrm{Div}_m; \times,m)$ does not have a two-element core. 
\end{lemma}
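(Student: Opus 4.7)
The plan is to exploit the observation, made in the paragraph preceding Lemma~\ref{lem:2el-core}, that $\{1,m\}$ is contained in every core of $(\mathrm{Div}_m;\times,m)$. Consequently, if the core were two-element, the core would have to be the substructure induced on $\{1,m\}$, and there would exist a homomorphism $e:(\mathrm{Div}_m;\times,m)\to(\{1,m\};\times,m)$ fixing both $1$ and $m$. I will derive a contradiction from the assumption that such an $e$ exists.

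The first step is to compute exactly which triples lie in the ternary relation $\times$ restricted to $\{1,m\}$. Since $m>1$, the product $m\cdot m=m^2$ is not a divisor of $m$, so the relation consists precisely of $(1,1,1)$, $(1,m,m)$ and $(m,1,m)$; crucially, no triple of the form $(m,m,z)$ belongs to it. Next I use the hypothesis that $m$ has no degree-one factor: for every prime $p\mid m$ we have $p^{2}\mid m$, so both $p$ and $p^{2}$ lie in $\mathrm{Div}_m$ and the triple $(p,p,p^{2})$ belongs to $\times$ on $\mathrm{Div}_m$. Applying $e$ yields $(e(p),e(p),e(p^{2}))\in\times$ on $\{1,m\}$, and by the previous sentence this forces $e(p)=1$ (and hence $e(p^{2})=1$) for every prime $p\mid m$.

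I then propagate this to the entire divisor lattice by induction on the number of prime factors counted with multiplicity. For any $x\in\mathrm{Div}_m$ with $x>1$, pick a prime $p\mid x$ and write $x=p\cdot x'$ with $x'\in\mathrm{Div}_m$; the triple $(p,x',x)$ belongs to $\times$ on $\mathrm{Div}_m$, so $(1,e(x'),e(x))$ must belong to $\times$ on $\{1,m\}$, giving $e(x)=e(x')$, and the induction hypothesis supplies $e(x')=1$. Specialising to $x=m$ gives $e(m)=1$, contradicting the requirement $e(m)=m$; so no such homomorphism exists and the core has at least three elements.

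I do not foresee a real obstacle: the argument is essentially a parity/degree computation that turns the hypothesis \textquotedblleft every prime factor appears with multiplicity $\ge 2$\textquotedblright\ into the impossibility of sending any prime divisor to $m$. The only point requiring mild care is the exact list of triples in $\times$ restricted to $\{1,m\}$, which depends on the fact that $\times$ is treated as a relation (not a function) and that $m^{2}\notin\mathrm{Div}_m$; once that is nailed down, the induction on prime factorisations is automatic.
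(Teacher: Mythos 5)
Your proof is correct and takes essentially the same approach as the paper's: both hinge on the fact that the no-degree-one-factor hypothesis places the triple $(p,p,p^2)$ in the relation $\times$ on $\mathrm{Div}_m$, while no triple of the form $(m,m,z)$ can hold in the two-element structure $(\{1,m\};\times,m)$. The only difference is presentational --- the paper argues that some prime must map to $m$ and then contradicts via that triple, whereas you use the triple to force every prime to map to $1$ and then propagate (with an explicit induction the paper compresses into ``multiplication is determined by the action of the primes'') to the contradiction $e(m)=1$.
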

\begin{proof}
Assume $m$ has the prime factorization $m=p_1^{\alpha_1} \cdot \ldots \cdot p_k^{\alpha_k}$ and
note that $\alpha_1,\ldots,\alpha_k > 1$.
Assume $e:\mathrm{Div}_m \rightarrow \{a,b\}$ is an endomorphism to a two-element core, \mbox{i.e.} the range of $e$ is $\{1,m\}$. 
Since multiplication is determined by the action of the primes, we can see that for one prime $p \in \{p_1,\ldots,p_k\}$ we must have $e(p)=m$.
Consider $p\times p=p^2$. If we apply the endomorphism $e$ to this tuple,
we end up with $e(p)\times e(p) = e(p)^2$ which is not possible. Hence, $e$ does not exist and 
$(\mathrm{Div}_m; \times,m)$ does not admit a two-element core.
\end{proof}

\subsection{Lower bounds}

We now move to lower bounds of NP-completeness.
\begin{proposition}
\label{prop:neq}
CSP$(\mathbb{N};\neq,\times)$ is NP-complete. 
\end{proposition}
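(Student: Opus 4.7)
The plan is to show the NP upper bound by invoking Proposition~\ref{prop:in-NP} (since $\neq$ is trivially first-order definable in $(\mathbb{N}; \times, 1, 2, \ldots)$) and to establish NP-hardness by a primitive positive reduction from 3-SAT. The key observation is that $x \times x = x$ holds in $\mathbb{N}$ if and only if $x \in \{0,1\}$, so this idempotence equation cuts out a Boolean sub-domain on which the ternary relation $x \times y = z$ coincides with Boolean AND and $x \neq y$ coincides with Boolean XOR, yielding a Schaefer-complete Boolean CSP.

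To turn this into a reduction, I first need to pp-define the two constants $0$ and $1$ individually. For $1$, I would use $x \times x = x \wedge \exists y(x \times y = y \wedge y \neq x)$: this formula fails at $x=0$ (the inner equation forces $y = 0$, violating $y \neq x$), holds at $x=1$ with witness any $y \geq 2$, and is ruled out for $x \geq 2$ by the idempotence. With $1$ pp-defined, $0$ is obtained from $x \times x = x \wedge x \neq 1$. Boolean negation on the pp-defined subdomain is then realised by a fresh variable $v'$ constrained by $v' \times v' = v' \wedge v' \neq v$.

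Given a 3-SAT instance, I would introduce one CSP-variable per propositional variable, attach the idempotence constraint to force it into $\{0,1\}$, and replace each negative literal by a negation-variable as above. A clause $\ell_1 \vee \ell_2 \vee \ell_3$ with corresponding Boolean values $w_1, w_2, w_3$ is then expressed as ``not all three complements are $1$'', using fresh variables $a_1, a_2, a_3, t$ and the conjunction $\bigwedge_k (a_k \times a_k = a_k \wedge a_k \neq w_k) \wedge (a_1 \times a_2 = t) \wedge (t \times a_3 = 0)$, where the final $0$ refers to the pp-defined zero. On the Boolean subdomain, $a_1 a_2 a_3 = 0$ says exactly that the $a_k$ are not all $1$, equivalently that the clause is satisfied. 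Polynomial-time computability of the transformation is immediate.

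The main obstacle is discovering that $1$ can be pp-defined from just $\times$ and $\neq$: the structure carries no constants, and the symmetric formula $x \times x = x$ is powerless to separate the two idempotents $0$ and $1$. One must exploit the asymmetric fact that $1$ multiplies some other element non-trivially while $0$ annihilates everything; once this asymmetry is captured by an existential witness, the remainder of the argument is routine Boolean gadgetry.
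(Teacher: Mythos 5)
Your proof is correct and follows essentially the same route as the paper: both carve out the Boolean domain with the idempotence equation $x \times x = x$, pp-define a constant by exploiting the multiplicative asymmetry between $0$ and $1$ (the paper defines $0$ directly via $x\times x = x \wedge \exists y\,(y \neq x \wedge x \times y = x)$, while you define $1$ via $\exists y\,(x \times y = y \wedge y \neq x)$ and then obtain $0$ using $\neq$), and both express ``not all three equal $1$'' by a chained product forced to equal the pp-defined $0$. The only cosmetic difference is the hardness source: the paper reduces from the Boolean CSP over $\{\neq,\, \{0,1\}^3\setminus\{(1,1,1)\}\}$ and cites the algebraic form of Schaefer's theorem, whereas you fold that step into a direct reduction from 3-SAT, using $\neq$ on the Boolean subdomain to realise negation --- the gadgets themselves coincide.
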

\begin{proof}
NP membership follows from Proposition~\ref{prop:in-NP}. For NP-hardness, we will encode the CSP of a certain Boolean constraint language, \mbox{i.e.} with domain $\{0,1\}$, with two relations: $\neq$ and $R_1:=\{0,1\}^3\setminus \{(1,1,1)\}$. This CSP is NP-hard because $\neq$ omits constant and semilattice polymorphisms and $R_1$ omits majority and minority polymorphisms (an algebraic reformulation of Schaefer's Theorem \cite{Schaefer} in the spirit of \cite{Jeavons}).

To encode our Boolean CSP, ensure all variables $v$ satisfy $v\times v=v$, which enforces the domain $\{0,1\}$. Consider $0$ to be false and $1$ to be true. $0$ is pp-definable by $x\times x=x \wedge \exists y \ y \neq x \wedge x\times y=x$. For $\{0,1\}^3\setminus \{(1,1,1)\}$ take $R_1(x,y,z)$ to be $\exists w\ x \times y =z  \wedge w \times z=0$; and for $\neq$ take $\neq$. The reduction may now be done by local substitution and the result follows.
\end{proof}

An operation $t:D^k \rightarrow D$ is a {\em weak near-unanimity} operation if $t$ is
idempotent and satisfies the equations

\[t(y,x,\dots,x)=t(x,y,x,\dots,x)=\dots=t(x,\dots,x,y).\]

\begin{theorem}[\cite{BartoK09}]
\label{algebrahardness}
Let $\Gamma$ be a constraint language over a finite set $D$. If $\Gamma$ is a core
and does not have a weak near-unanimity polymorphism, then CSP$(\Gamma)$ is NP-hard.
\end{theorem}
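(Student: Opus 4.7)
The plan is to derive this hardness theorem from two deep results in universal algebra combined with the algebraic approach to CSPs. Since $\Gamma$ is assumed to be a core on a finite set, the complexity of CSP$(\Gamma)$ is governed by the idempotent polymorphism clone of $\Gamma$. In particular, if $\Gamma$ admits a primitive positive interpretation of a template whose CSP is already known to be NP-hard (for instance the Boolean template $(\{0,1\};R_{1/3})$ encoding positive 1-in-3-SAT, or any template captured by Schaefer's theorem on the NP-hard side), then CSP$(\Gamma)$ is itself NP-hard via the standard pp-interpretation reduction of Bulatov, Jeavons and Krokhin.

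First I would invoke Bulatov's theorem marking the dichotomy boundary: a finite core whose idempotent reduct has no \emph{Taylor polymorphism} admits a pp-interpretation of such an NP-hard template, so its CSP is NP-hard. The proof of this step rests on tame congruence theory (Hobby--McKenzie), exploiting the fact that absence of a Taylor term forces the presence of a type-1 (``set'') minimal set in the variety generated by $\Gamma$, which in turn allows one to interpret all relations on a two-element set.

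Second I would invoke the Mar\'oti--McKenzie theorem: a finite idempotent algebra has a Taylor term if and only if it has a weak near-unanimity term (of some arity $k \ge 3$). Combining the two results yields the statement: if $\Gamma$ is a finite core lacking a WNU polymorphism, then it lacks a Taylor polymorphism, hence CSP$(\Gamma)$ is NP-hard. The coreness hypothesis enters essentially here, because only for cores does one get to restrict attention to the idempotent fragment of the polymorphism clone without changing the complexity of the CSP.

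The main obstacle is really the Mar\'oti--McKenzie equivalence: while Taylor's identities form an infinite diagonal system of equations, collapsing them into a single WNU identity requires a careful hierarchy of derived terms and combinatorial manipulations on words over the variables. The Bulatov reduction is also nontrivial but is, by now, a thoroughly developed piece of the universal-algebraic CSP machinery, so most of the technical novelty of the theorem sits in the passage between Taylor terms and WNU terms.
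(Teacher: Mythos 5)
Your proposal is correct, but note that the paper does not prove this statement at all: it is imported wholesale by citation to Barto--Kozik, and the result it invokes is exactly the combination you describe, namely the Bulatov--Jeavons--Krokhin hardness theorem for finite cores lacking a Taylor polymorphism together with the Mar\'oti--McKenzie equivalence of Taylor terms and weak near-unanimity terms for finite idempotent algebras (with coreness used, as you say, to pass to the idempotent clone via the constant expansion). So your reconstruction matches the standard argument underlying the cited theorem, and there is no gap to report.
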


\begin{lemma} \label{finitehard}
Arbitrarily choose an $m>1$ such that $m \neq k^n$ for all $k,n > 1$ together with 
a finite set $\{1,m\} \subseteq S \subseteq {\mathbb N} \setminus \{0\}$.
If $(\mathrm{Div}_m;\times,m)$ is a core, then CSP$(S;\times,m)$ is NP-hard.
\end{lemma}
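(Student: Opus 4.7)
The plan is to invoke Theorem~\ref{algebrahardness} on $(\mathrm{Div}_m; \times, m)$ and then transfer NP-hardness to CSP$(S; \times, m)$ via a pp-reduction. Since $(\mathrm{Div}_m; \times, m)$ is a core by hypothesis, the first task is to show it admits no weak near-unanimity polymorphism. The argument is direct: fix any prime $p$ dividing $m$, suppose $t$ is a $k$-ary WNU polymorphism with $k \geq 2$, and for $j \in \{1, \ldots, k\}$ let $\vec{a}_j \in \mathrm{Div}_m^k$ have $p$ in coordinate $j$ and $1$s elsewhere. The WNU identities give $t(\vec{a}_1) = \cdots = t(\vec{a}_k) = \gamma$ for some $\gamma \in \mathrm{Div}_m$. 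Every coordinate of every iterated coordinate-wise product $\vec{a}_1 \cdot \vec{a}_2 \cdots \vec{a}_j$ lies in $\{1, p\} \subseteq \mathrm{Div}_m$, so applying the polymorphism property step by step yields $t(p, \ldots, p) = \gamma^k$; idempotence gives $t(p, \ldots, p) = p$, so $\gamma^k = p$, which is impossible for $k \geq 2$ since $p$ is prime.

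The second task is to reduce CSP$(\mathrm{Div}_m; \times, m)$ to CSP$(S; \times, m)$. The pp-formula $\exists y\,(v \times y = m)$, interpreted in $(S; \times, m)$, restricts $v$ to $T := \{d \in \mathrm{Div}_m \cap S : m/d \in S\}$, which always contains $\{1, m\}$. When $\mathrm{Div}_m \subseteq S$ one has $T = \mathrm{Div}_m$, and each variable of a CSP$(\mathrm{Div}_m; \times, m)$-instance may be augmented with the above pp-constraint, giving a direct reduction. Otherwise the pp-interpretation captures only the substructure induced on $T$.

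The main obstacle is thus to argue NP-hardness of the CSP on this substructure for general $T$ with $\{1, m\} \subseteq T \subseteq \mathrm{Div}_m$. When $T = \{1, m\}$, NP-hardness is established by encoding positive 1-in-3 SAT via the ternary relation $\times \cap \{1, m\}^3 = \{(1,1,1), (1,m,m), (m,1,m)\}$, using $x_1 \times x_2 = y \wedge y \times x_3 = m$ to force exactly one of $x_1, x_2, x_3$ to equal $m$ while the others equal $1$. For larger $T$ the hypothesis $m \neq k^n$ for $k, n > 1$ is critical: it prevents $m$ from being a perfect $n$-th power, so that pp-constraints such as $v \times v \times \cdots \times v = m$ with $n \geq 2$ factors are unsatisfiable, and similar identities exploiting the constant $m$ can be combined with the divisor constraint to progressively exclude non-trivial pairs $(d, m/d) \in T^2$. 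Together with the core assumption on $(\mathrm{Div}_m; \times, m)$, these refinements tighten the pp-definition down to $\{1, m\}$ and make the 1-in-3 SAT encoding rigid, completing the reduction.
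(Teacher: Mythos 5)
Your opening step is sound: the iterated-product argument correctly shows that $(\mathrm{Div}_m;\times,m)$ has no weak near-unanimity polymorphism of any arity $k\ge 2$ (and, notably, it needs only a prime divisor $p$ of $m$, not the hypothesis $m\neq k^n$), so Theorem~\ref{algebrahardness} plus the core hypothesis gives NP-hardness of CSP$(\mathrm{Div}_m;\times,m)$. The gap is everything after that. The lemma asserts hardness of CSP$(S;\times,m)$ for an \emph{arbitrary} finite $S$ with $\{1,m\}\subseteq S\subseteq\mathbb{N}\setminus\{0\}$, and your transfer covers only the two extremes: $\mathrm{Div}_m\subseteq S$ (direct reduction) and $T=\{1,m\}$ (your correct 1-in-3-SAT gadget). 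For the intermediate case, the passage ``similar identities \ldots\ progressively exclude non-trivial pairs \ldots\ tighten the pp-definition down to $\{1,m\}$'' is not an argument, and in fact no argument of that shape can exist. Take $m=108=2^2\cdot 3^3$ (not a perfect power; and $(\mathrm{Div}_{108};\times,108)$ is a core, since $e(2)^2e(3)^3=108$ forces $e(2)=2$, $e(3)=3$, hence $e=\mathrm{id}$) and $S=\{1,4,27,108\}$, so that $T=S$. Then $\{1,108\}$ is \emph{not} pp-definable in $(S;\times,108)$: the map $f(x,y)=2^{v_2(x)}\cdot 3^{v_3(y)}$, where $v_p$ denotes the exponent of $p$, sends $S^2$ into $S$, preserves $\times\cap S^3$ and the constant $108$, hence is a polymorphism, yet $f(108,1)=4\notin\{1,108\}$. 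So all hypotheses of the lemma hold, the conclusion is still true for this $S$, but your route to it is blocked: no pp-formula carves out $\{1,m\}$, so the 1-in-3 encoding cannot be made ``rigid''.

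The paper's proof avoids the transfer problem entirely: it never passes through $\mathrm{Div}_m$, but runs the WNU-omission argument directly on $(S;\times,m)$. For each arity $k$, the relation $R=\{(x_1,\ldots,x_k)\in S^k : \prod_{i=1}^k x_i=m\}$ is pp-definable in $(S;\times,m)$ by chaining binary product atoms (for the tuples that matter, all partial products lie in $\{1,m\}\subseteq S$); $R$ contains the $k$ tuples having $m$ in one coordinate and $1$ elsewhere; and applying a purported $k$-ary WNU columnwise yields, via the WNU identities, a constant tuple $(a,\ldots,a)\in R$, i.e.\ $a^k=m$ with $a,k>1$ --- contradicting precisely the hypothesis $m\neq k^n$, which your proved steps never invoke (a telltale sign that the argument has drifted from the statement). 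This requires nothing of $S$ beyond $\{1,m\}\subseteq S$. One technical point applies to both write-ups: Theorem~\ref{algebrahardness} must be applied to a core, so one should first replace $(S;\times,m)$ by its core (whose domain still contains $1$ and $m$, as every endomorphism fixes both) and run the same argument there; with that one-line repair, the paper's direct argument proves the lemma as stated, whereas your proposal would still be missing its central case.
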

\begin{proof}
Assume $(S;\times,m)$ admits a weak near-unanimity operation $t:S^k \rightarrow S$.
The relation $\prod_{i=1}^{k} x_i = x_{k+1}$ is pp-definable in $(S;\times,m)$ and so is the relation

\[R=\{(x_1,\ldots,x_k) \in S^k \; | \; \prod_{i=1}^k x_i=m\}.\]

The relation $R$ contains the tuples 
\[
\begin{array}{c}
(m,1,\ldots,1) \\ (1,m,1,\ldots,1) \\ \vdots \\ (1,\ldots,1,m).
\end{array}
\]
Applying $t$ component-wise (\mbox{i.e.} vertically) to these tuples yields a tuple $(a,\ldots,a)$ for some $a \in D$.
However, $R$ does not contain a tuple $(a,\ldots,a)$ for any $a \in D$ since this would imply that $m=a^k$
for some $a,k > 1$.
We conclude that CSP$(S;\times ,m)$ is NP-hard by Theorem~\ref{algebrahardness}.
\end{proof}
Note that the proof of this last lemma was eased by our assumption that $\times$ is a relation and not a function. Had it been a function we would have to prove the domain $S$ would be closed under it.

\begin{theorem} \label{infinitemul}
CSP$(\mathbb{N};\times,m)$ is NP-hard for every integer $m > 1$.
\end{theorem}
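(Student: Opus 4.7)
The plan is to combine Lemma \ref{lem:2el-core} (the two-element-core case) with Lemma \ref{finitehard} (the rigid-core case), preceded by a reduction extracting the radical base when $m$ is a perfect power.

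First I would reduce to the case where $m$ is not a perfect power. If $m = n^r$ with $r \geq 2$ and $n > 1$ chosen so that $n$ is not itself a perfect power, then $\{n\}$ is pp-definable in $(\mathbb{N};\times,m)$ as the unique positive integer solution of $y^r = m$, written as $y \times y = z_1 \wedge z_1 \times y = z_2 \wedge \cdots \wedge z_{r-2} \times y = m$. Uniqueness follows because $y \mapsto y^r$ is strictly increasing on positive integers and $y=0$ is ruled out by $m \geq 2$. This yields $\mathrm{CSP}(\mathbb{N};\times,n) \redlogm \mathrm{CSP}(\mathbb{N};\times,m)$, so it suffices to treat non-perfect-power $m$.

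Next, if $m$ has a degree-one factor, Lemma \ref{lem:2el-core} says $(\{1,m\};\times,m)$ is the core of $(\mathrm{Div}_m;\times,m)$, so the two algebras have the same CSP. Identifying $1 \leftrightarrow 0$ and $m \leftrightarrow 1$, the core yields a Boolean CSP with the ternary relation $R = \{(0,0,0),(0,1,1),(1,0,1)\}$ together with both constants (note $\{1\}$ is pp-definable by $x \times x = x$). A direct Schaefer check shows that $R$ is preserved by none of $\min$, $\max$, ternary majority, or minority: for example $\min((0,1,1),(1,0,1)) = (0,0,1) \notin R$ and $(0,0,0) \oplus (0,1,1) \oplus (1,0,1) = (1,1,0) \notin R$. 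Hence the Boolean CSP is NP-hard, and since $\mathrm{Div}_m$ is pp-definable in $(\mathbb{N};\times,m)$ via $\exists y\,(x \times y = m)$, NP-hardness transfers to $\mathrm{CSP}(\mathbb{N};\times,m)$.

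Otherwise $m$ has no degree-one factor and is not a perfect power, so $m = p_1^{\alpha_1}\cdots p_k^{\alpha_k}$ with every $\alpha_i \geq 2$ and $\gcd(\alpha_1,\ldots,\alpha_k)=1$. Here I would invoke Lemma \ref{finitehard} with $S=\mathrm{Div}_m$: the hypothesis $m \neq k^n$ holds, so only the core condition on $(\mathrm{Div}_m;\times,m)$ remains. Any endomorphism $e$ is determined by $e(p_i)=\prod_j p_j^{\gamma_{ij}}$ subject to $\sum_i \alpha_i \gamma_{ij}=\alpha_j$ for every $j$; when every non-negative integer solution comes from a permutation of primes of equal exponent (hence an automorphism), the structure is a core and Lemma \ref{finitehard} concludes. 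The chief obstacle is the subcase where a non-injective endomorphism exists—typically one with $e(p_i)=1$ for some $i$, which occurs exactly when each $\alpha_j$ admits a non-negative integer representation using only the $\alpha_{i'}$'s for $i' \neq i$. I would pass to the image of such an $e$, argue it is isomorphic to $(\mathrm{Div}_{m'};\times,m')$ for some $m'$ with strictly fewer distinct primes than $m$, and recurse on $m'$. Since the number of distinct primes strictly decreases, the recursion terminates either in the prime-power case (handled via Step 1 or, when $m' = p^\alpha$ with $\alpha \geq 2$, by showing $(\mathrm{Div}_{p^\alpha};\times,p^\alpha)$ admits no weak near-unanimity polymorphism and invoking Theorem \ref{algebrahardness}) or in a rigid multi-prime core where Lemma \ref{finitehard} applies directly.
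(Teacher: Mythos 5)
Your Step 1 matches the paper, and your Step 2 (degree-one factor, two-element core, Schaefer check) is correct as far as it goes. The genuine gap is in Step 3, and it is exactly the step the paper's proof is engineered to avoid. Your recursion rests on the claim that the image of a non-injective endomorphism of $(\mathrm{Div}_m;\times,m)$ is isomorphic to $(\mathrm{Div}_{m'};\times,m')$ for some $m'$ with strictly fewer prime factors. This is a nontrivial structural assertion that you only gesture at (``argue it is isomorphic''): one must show that the parametrization $(\beta_i) \mapsto \prod_i e(p_i)^{\beta_i}$ of the image is injective and that the induced multiplication relation on the image is exactly a divisor relation, and both facts require exploiting the exponent equations $\sum_i \alpha_i\gamma_{ij} = \alpha_j$ in an essential way (for instance, to rule out images like $\{1,p^2,p^3,\ldots,p^N\}$, which is not isomorphic to any $\mathrm{Div}_{m'}$ because it has two irreducible elements). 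Your terminal case is also unproven: for $m'=p^\alpha$ with $\alpha \geq 2$ you propose ``showing $(\mathrm{Div}_{p^\alpha};\times,p^\alpha)$ admits no weak near-unanimity polymorphism'', but the only WNU-killing argument on the table (that of Lemma~\ref{finitehard}) breaks down precisely for perfect powers, so a new argument would be needed there. (In fact, if your isomorphism claim were proved, that case could never arise, since a perfect-power image would yield an integer root of $m$; but your sketch establishes neither point.)

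The idea you missed is that no identification of the core is necessary. The proof of Lemma~\ref{finitehard} uses nothing about $S$ beyond the facts that $\{1,m\}\subseteq S$, that $(S;\times,m)$ is a core, and that $m$ is not a perfect power: the relation $R=\{(x_1,\ldots,x_k) \in S^k : \prod_i x_i = m\}$ is pp-definable over any such $S$, contains the tuples $(m,1,\ldots,1),\ldots,(1,\ldots,1,m)$, and contains no constant tuple, so a $k$-ary WNU would force $a^k=m$, a contradiction, and Theorem~\ref{algebrahardness} applies. Accordingly, the paper simply takes $S$ to be the core of $(\mathrm{Div}_m;\times,m)$ --- whatever it happens to be --- noting only that the constant $m$ and the unique idempotent $1$ must survive in it; since CSP$(S;\times,m)$ coincides with CSP$(\mathrm{Div}_m;\times,m)$, NP-hardness follows with no degree-one-factor case split, no computation of endomorphisms, and no recursion. (The hypothesis of Lemma~\ref{finitehard} as printed reads ``$(\mathrm{Div}_m;\times,m)$ is a core'', which probably misled you; its proof only needs ``$(S;\times,m)$ is a core'', and that is what the paper's application supplies.)
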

\begin{proof}
If $m=k^n$ for some $k,n > 1$, then we can pp-define the constant relation $\{k\}$
since $x=k \Leftrightarrow \prod_{i=1}^k x=m$. Hence, we assume without loss of generality that $m \neq k^n$ for all $k,n > 1$.

We further know that $\mathrm{Div}_m$ is pp-definable in
$(\mathbb{N};\times,m)$, \mbox{i.e.} there is polynomial time reduction from $(\mathrm{Div}_m;\times,m)$ to $(\mathbb{N};\times,m)$. The core of $(\mathrm{Div}_m;\times,m)$ is some $(S;\times,m)$, where $\{1,m\} \subseteq S \subseteq \mathrm{Div}_m$ and the result follows from Lemma~\ref{finitehard}.
\end{proof}

\begin{theorem}
\label{thm:base-case}
Let $U$ be any subset of $\mathbb{N} \setminus \{0,1\}$ so that every $x \in U$ has a degree-one factor. Then  CSP$(\mathbb{N};\times,U)$ is NP-hard.
\end{theorem}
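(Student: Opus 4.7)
The plan is to reduce a single Boolean NP-hard CSP uniformly to CSP$(\mathbb{N};\times,U)$, exploiting the fact that every $m \in U$ has, by hypothesis, a degree-one prime factor and so, by Lemma~\ref{lem:2el-core}, gives rise to the same two-element core $(\{1,m\};\times,m)$ up to relabelling. Take the ternary Boolean relation $R := \{(0,0,0),(0,1,1),(1,0,1)\}$ (the graph of addition on $\{0,1\}$ restricted to non-overflowing inputs), together with the unary relation $\{1\}$. The problem CSP$(\{0,1\};R,\{1\})$ is NP-hard since positive 1-in-3-SAT is pp-definable inside it by $\exists w\,R(x_1,x_2,w) \wedge R(w,x_3,1)$, and positive 1-in-3-SAT is NP-complete by Schaefer~\cite{Schaefer}.

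Given an instance $\phi$ of CSP$(\{0,1\};R,\{1\})$, I would construct an instance $\phi'$ over $(\mathbb{N};\times,U)$ as follows: introduce a fresh variable $M$ with atom $U(M)$; for every variable $X$ of $\phi$ introduce an auxiliary variable $Y_X$ and the atom $X \times Y_X = M$ (forcing $X$ to divide the value of $M$); replace each atom $R(X,Y,Z)$ by $X \times Y = Z$; and identify every variable $X$ constrained by $\{1\}(X)$ in $\phi$ with $M$. This is a polynomial-time transformation.

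For the forward direction, pick any $m \in U$ (the case $U=\emptyset$ being vacuous) and lift a Boolean solution $s$ of $\phi$ via $s'(M)=m$, $s'(X)=m^{s(X)}$, $s'(Y_X)=m^{1-s(X)}$. The three tuples of $R$ correspond precisely to the three well-defined products $(1,1,1)$, $(1,m,m)$, $(m,1,m)$. For the backward direction, given a solution $s'$ of $\phi'$, let $m := s'(M) \in U$ and let $p$ be a degree-one prime factor of $m$. The buddy constraint forces $s'(X) \mid m$, so $v_p(s'(X)) \in \{0,1\}$. Setting $s(X) := v_p(s'(X))$, multiplicativity of $v_p$ turns each product constraint $X\times Y=Z$ into $s(X)+s(Y)=s(Z)$ in $\{0,1\}$, which rules out $(1,1,\cdot)$ and yields exactly the triples of $R$; the $\{1\}$-atoms hold since $s(M)=v_p(m)=1$.

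The principal obstacle I anticipate is \emph{uniformity}: we cannot pp-define any specific element of $U$, so Theorem~\ref{infinitemul} does not apply off-the-shelf, and any naive attempt to replace a constant $m$ by a variable constrained to $U$ loses control of which element ultimately gets chosen. The resolution is precisely that the 2-element Boolean core supplied by Lemma~\ref{lem:2el-core} yields the \emph{same} ternary relation $R$ regardless of which $m\in U$ is assigned to $M$, while the $v_p$-retract on the backward side always produces a valid Boolean assignment whatever the choice turns out to be.
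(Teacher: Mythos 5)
Your proof is correct, and it differs from the paper's own argument in instructive ways. Both proofs share the same skeleton: the unknown constant is replaced by a single variable $M$ constrained by $U(M)$, divisibility constraints pin every other variable inside $\mathrm{Div}_{s'(M)}$, and the degree-one hypothesis (via Lemma~\ref{lem:2el-core}) guarantees that whichever $m' \in U$ gets assigned to $M$, the induced structure retracts onto the same two-element core, so the backward direction is uniform in $m'$ --- exactly the ``uniformity'' issue you flag and resolve. The difference lies in the source of hardness: the paper fixes $m \in U$ and reduces from CSP$(\mathrm{Div}_m;\times,m)$, whose NP-hardness it imports from Theorem~\ref{infinitemul}, hence ultimately from Lemma~\ref{finitehard} and the Barto--Kozik weak near-unanimity criterion (Theorem~\ref{algebrahardness}); you instead reduce directly from the Boolean structure $(\{0,1\};R,\{1\})$ with $R=\{(0,0,0),(0,1,1),(1,0,1)\}$ and certify its hardness by pp-defining positive 1-in-3-SAT \cite{Schaefer}. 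Your route is more elementary, avoiding the algebraic machinery entirely; this is available precisely because in the degree-one case the core is known explicitly, whereas Theorem~\ref{infinitemul} must also cover integers $m$ with no degree-one factor. Your write-up is also more explicit on two points the paper compresses into ``correctness of the reduction is easy to see'': first, the buddy constraints $X \times Y_X = M$, which are genuinely needed (without them the instance $m \times m = z$, unsatisfiable over $\mathrm{Div}_m$, would translate to the satisfiable $v_m \times v_m = z \wedge U(v_m)$; in the paper these constraints are only implicit, coming from the pp-definition of $\mathrm{Div}_m$ invoked inside Theorem~\ref{infinitemul}); second, the backward map via the valuation $v_p$, which is the endomorphism of Lemma~\ref{lem:2el-core} in disguise. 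One small correction: when $U=\emptyset$ the statement is not vacuously true but false (the CSP is then in P), so non-emptiness must be assumed, as the paper's abstract indeed does.
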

\begin{proof}
From Lemma~\ref{lem:2el-core}, for each $x \in U$, the core of $(\mathbb{N};\times,x)$ is the same (up to isomorphism). Fix some $m \in U$. We claim there is  a polynomial time reduction from CSP$(\mathrm{Div}_m;\times,m)$ to CSP$(\mathbb{N};\times,U)$, whereupon the result follows from Theorem~\ref{infinitemul}. 

To see the claim, take an instance $\phi$ of CSP$(\mathrm{Div}_m;\times,m)$ and build an instance $\psi$ of CSP$(\mathbb{N};\times,U)$ by adding an additional variable $v_m$, now substituting instances of $m$ for $v_m$, and adding the constraint $U(v_m)$. Correctness of the reduction is easy to see and the result follows.
\end{proof}

For $x \in \mathbb{N}\setminus\{0,1\}$, define its \emph{minimal exponent}, $\minexp(x)$, to be the smallest $j$ such that $x$ has a factor of $p^j$, for some prime $p$, but not a factor of $p^{j+1}$. Thus an integer with a degree-one factor has minimal exponent $1$. Call $x \in \mathbb{N}\setminus\{0,1\}$ \emph{square-free} if it omits all repeated prime factors. For a set  $U \subseteq  \mathbb{N}\setminus\{0,1\}$, define its \emph{basis}, $\mathrm{basis}(U)$ to be the set $\{\minexp(x) : x \in U\}$.
\begin{lemma}
\label{lem:atelier}
Let  $U \subseteq  \mathbb{N}\setminus\{0,1\}$, so that $\mathrm{basis}(U)$ is finite and $\mathrm{basis}(U)\neq \{1\}$. There is some set $X$ pp-definable in $(\mathbb{N};\times,U)$ so that $\mathrm{basis}(X)=\{1\}$. 
\end{lemma}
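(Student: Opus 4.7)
My plan is to exhibit a suitable pp-formula $\phi(x)$ and analyze the set $X$ it defines. Set $D := \max \mathrm{basis}(U)$, which is well-defined and at least $2$ because $\mathrm{basis}(U)$ is a finite, nonempty subset of $\mathbb{N}^+$ different from $\{1\}$.

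The first candidate is the pp-formula
\[
\phi(x) \;:=\; \exists u\ U(u) \wedge x^D = u,
\]
where the equation $x^D = u$ is expanded using $D-1$ auxiliary existentially quantified variables and ternary $\times$-atoms. Let $X := \{x \in \mathbb{N} : \phi(x)\}$. Since $U \subseteq \mathbb{N} \setminus \{0,1\}$, we have $0, 1 \notin X$. For any $x \in X$ we then get $\minexp(x^D) = D \cdot \minexp(x) \in \mathrm{basis}(U) \subseteq \{1,\ldots,D\}$, which forces $\minexp(x) = 1$. Hence $\mathrm{basis}(X) \subseteq \{1\}$, so if $X$ is nonempty then $\mathrm{basis}(X) = \{1\}$ exactly.

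The substantive step is verifying $X \neq \emptyset$, i.e.\ that some $u \in U$ is a nontrivial $D$-th power. When this follows directly from the hypotheses on $U$, the argument is complete. Otherwise, I would refine $\phi$ to a product formula
\[
\phi'(x) \;:=\; \exists u_1, \ldots, u_s \in U\ u_1 \cdot u_2 \cdots u_s = x^k,
\]
where $s$ and $k$ are chosen based on $\mathrm{basis}(U)$ so that the identity $k \cdot \minexp(x) = \minexp(u_1 \cdots u_s)$ pins $\minexp(x)$ to $1$ while allowing some realizable product. Finiteness of $\mathrm{basis}(U)$ means the prime-exponent ``profiles'' of elements of $U$ (read modulo $k$) take only boundedly many shapes, so a combinatorial/pigeonhole argument yields $u_1,\ldots,u_s\in U$ whose product is a $k$-th power with minimal exponent exactly $k$.

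The main obstacle is this non-emptiness step: one must rigorously verify, uniformly over all $U$ satisfying the hypotheses, that the parameters $s,k$ (or the simple formula $\phi$) can be chosen so that $X$ is non-empty. This is precisely where the assumption that $\mathrm{basis}(U)$ is finite and distinct from $\{1\}$ is used: $D$ is a well-defined finite upper bound on the minexp values in $U$, and the finite basis constrains the exponent profiles of $U$ enough to support the pigeonhole argument.
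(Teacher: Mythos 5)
There is a genuine gap, and it is not only the non-emptiness step you flag. Your first formula $\phi$ (requiring $x^D \in U$) is indeed sound, but non-emptiness genuinely fails: take $U=\{2^2\cdot 3^3\}$, so $\mathrm{basis}(U)=\{2\}$ and $D=2$, yet $U$ contains no perfect square, so $X=\emptyset$. The fatal problem is in your fallback formula $\phi'$. The claim that the identity $k\cdot\minexp(x)=\minexp(u_1\cdots u_s)$ ``pins $\minexp(x)$ to $1$'' is false, because a product of elements of $U$ is in general \emph{not} an element of $U$, so nothing bounds its minimal exponent by $\max\mathrm{basis}(U)$. Concretely, for the same $U=\{2^2\cdot 3^3\}$: every product of $s$ elements of $U$ equals $2^{2s}3^{3s}$, which is a $k$-th power if and only if $k\mid s$, and in that case $x=2^{2s/k}3^{3s/k}$ lies in the defined set with $\minexp(x)=2s/k\geq 2$; if $k\nmid s$ the defined set is empty. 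So for this $U$ \emph{no} choice of $s$ and $k$ makes $\phi'$ both non-empty and sound. The pigeonhole argument you sketch is also unsupported: finiteness of $\mathrm{basis}(U)$ only bounds the minimal exponent, while the number of primes occurring in elements of $U$ and their non-minimal exponents are unbounded, so the ``profiles modulo $k$'' do not range over boundedly many shapes.

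The paper's proof supplies exactly the idea your attempt is missing: mix a \emph{single} $U$-constrained variable with \emph{unconstrained} slack variables whose exponents are copied from a concrete witness of the maximum. Let $r=\max(\mathrm{basis}(U))$ and fix a witness $x=q^r\cdot p_1^{a_1}\cdots p_k^{a_k}\in U$ with $q$ square-free, the $p_i$ coprime to $q$, and all $a_i>r$; then define $\xi(y):=\exists z,x_1,\ldots,x_k\ U(z)\wedge y^r\cdot x_1^{a_1}\cdots x_k^{a_k}=z$. Non-emptiness is immediate: $y=q$ works with $z=x$ and $x_i=p_i$. Soundness uses the calibration of exponents: since $z\in U$, some prime $p$ has $v_p(z)=\minexp(z)\leq r$; as each $a_i>r$, that prime cannot divide any $x_i$, so $v_p(z)=r\,v_p(y)\leq r$ with $v_p(y)\geq 1$, forcing $v_p(y)=1$ and hence $\minexp(y)=1$ (the degenerate values $y=0$ and $y=1$ are excluded because they make $z=0\notin U$ or make every prime exponent of $z$ exceed $r$). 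The two ingredients you lack are thus: (i) keeping exactly one variable constrained by $U$, so the bound $\minexp(z)\leq r$ remains available, and (ii) slack exponents $a_i>r$ taken from an actual witness, which simultaneously guarantee that the set is non-empty and that the slack variables cannot supply the minimal-exponent prime.
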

\begin{proof}  
Let $r=\max(\mathrm{basis}(U))$ and take an element $x$ that witnesses this, of the form $q^r\cdot p_1^{a_1}\cdots p_k^{a_k}$, where $p_1,\ldots,p_k$ are prime and each is coprime to $q$ (which is square-free), and where $a_1,\ldots,a_k> r$. Set
\[\xi(y):=\exists z,x_1,\ldots,x_k \ U(z) \wedge y^r \cdot x_1^{a_1} \cdots x_k^{a_k} = z.\] 
We claim that $\xi$ defines a set of integers $X$ so that $\mathrm{basis}(X)$ has the desired property. The non-emptiness is clear since $1 \in \mathrm{basis}(X)$ by construction.



Firstly, we will by contradiction argue that $0 \notin \mathrm{basis}(X)$. Assume $0 \in \mathrm{basis}(X)$. This implies that $1 \in X$. Hence, $\exists z,x_1,...,x_k \ U(z) \wedge 1^r \cdot x_1^{a_1} \cdots x_k^{a_k} = z$, so $x_1^{a_1} \cdots x_k^{a_k} \in U$. It follows that $d=\min\{a_1,\ldots,a_k\} \in \mathrm{basis}(U)$. Now, $a_1,\ldots,a_k > r$ so $d > r$. This contradicts the fact that $r=\max(\mathrm{basis}(U))$.

We will now argue by contradiction that $1<s \notin \mathrm{basis}(X)$. Assume $s \in \mathrm{basis}(X)$. Then there exists a $t=q^s\cdot p_1^{a_1}\cdots p_k^{a_k} \in X$ where $s < a_1,\ldots,a_k$. Since $t \in X$, we know that $\exists z,x_1,\ldots,x_k \ U(z) \wedge t^s \cdot x_1^{a_1}\cdots x_k^{a_k}  = z$. Let $e= t^r \cdot x_1^{a_1}\cdots x_k^{a_k}  \in U$ as above. Let's expand $t$: $e = (q^s \cdot p_1^{a_1}\cdots p_k^{a_k})^r \cdot x_1^{a_1}\cdots x_k^{a_k}$. We see  that $\minexp(e)>r$ which contradicts the choice of $r$.
\end{proof}
\begin{example}
We provide an example of the construction of the previous lemma \emph{in vivo}. Let $U:=\{p^2, p^2q^3, p^4q^4r^8 : \mbox{ $p,q,r$ primes}\}$, so that  $\mathrm{basis}(U)=\{2,4\}$. Then $\xi(y):=\exists z,x \ U(z) \wedge y^4 \cdot x^8 = z$. We can now deduce $X:=\{p, pq^2: \mbox{ $p,q$ primes}\}$ and $\mathrm{basis}(X)=\{1\}$.
\end{example}

\begin{theorem}
\label{thm:main-hard}
Let $U \subseteq  \mathbb{N}\setminus\{0,1\}$ be so that $\mathrm{basis}(U)$ is finite. Then CSP$(\mathbb{N};\times,U)$ is NP-complete.
\end{theorem}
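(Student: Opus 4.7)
The plan is to combine the pieces already developed: Proposition \ref{prop:in-NP} supplies NP membership immediately, since $(\mathbb{N};\times,U)$ is a finite signature reduct of $(\mathbb{N};\times,1,2,\ldots)$ whenever $U$ is taken as a unary relation. The whole content is then to establish NP-hardness, and I would split the argument according to whether $\mathrm{basis}(U)$ is equal to $\{1\}$ or not.

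In the first case, every $x \in U$ has $\minexp(x)=1$, which is precisely the statement that every $x \in U$ has a degree-one factor. Thus Theorem \ref{thm:base-case} applies directly and gives NP-hardness of CSP$(\mathbb{N};\times,U)$.

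In the remaining case we may assume $U$ is non-empty (the empty $U$ is a degenerate case, where the theorem is vacuous or must be ruled out) so that $\mathrm{basis}(U)$ is a non-empty finite set different from $\{1\}$. This is exactly the hypothesis of Lemma \ref{lem:atelier}, which produces a set $X \subseteq \mathbb{N}$, pp-definable from $(\mathbb{N};\times,U)$, with $\mathrm{basis}(X)=\{1\}$. By the same reasoning as before, every element of $X$ has a degree-one factor, and moreover $X \subseteq \mathbb{N}\setminus\{0,1\}$ since $0,1$ have no defined $\minexp$ (and indeed the defining formula for $X$ forces $y\neq 0$ because the witness $z$ lies in $U \subseteq \mathbb{N} \setminus \{0,1\}$). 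Theorem \ref{thm:base-case} now gives that CSP$(\mathbb{N};\times,X)$ is NP-hard.

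To finish, I would invoke the standard fact that pp-definability yields a polynomial-time reduction between CSPs: any instance of CSP$(\mathbb{N};\times,X)$ is rewritten into an instance of CSP$(\mathbb{N};\times,U)$ by replacing each atom $X(v)$ with the existentially quantified conjunction from Lemma \ref{lem:atelier} (introducing fresh auxiliary variables for each atom), after which the resulting primitive positive sentence is equivalent on $(\mathbb{N};\times,U)$. Hence CSP$(\mathbb{N};\times,U)$ inherits NP-hardness from CSP$(\mathbb{N};\times,X)$, completing the proof. I do not anticipate any real obstacle beyond being explicit about the pp-reduction step, since the difficult construction has been carried out in Lemma \ref{lem:atelier} and the base case NP-hardness is already isolated in Theorem \ref{thm:base-case}.
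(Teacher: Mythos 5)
Your proposal is correct and follows essentially the same route as the paper: NP membership via Proposition~\ref{prop:in-NP}, then NP-hardness by using Lemma~\ref{lem:atelier} to pp-define a set $X$ with $\mathrm{basis}(X)=\{1\}$, reducing CSP$(\mathbb{N};\times,X)$ to CSP$(\mathbb{N};\times,U)$ by local substitution, and invoking Theorem~\ref{thm:base-case}. Your explicit case split is in fact slightly more careful than the paper's write-up, since Lemma~\ref{lem:atelier} formally requires $\mathrm{basis}(U)\neq\{1\}$, and when $\mathrm{basis}(U)=\{1\}$ one should (as you do) apply Theorem~\ref{thm:base-case} directly.
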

\begin{proof}
Membership of NP follows from Proposition~\ref{prop:in-NP}. We use the construction of the previous lemma to pp-define $X$ with $\mathrm{basis}(X)=\{1\}$. This allows us to polynomially reduce CSP$(\mathbb{N};\times,X)$ to CSP$(\mathbb{N};\times,U)$ by local substitution. NP-hardness for the former comes from Theorem~\ref{thm:base-case} and the result follows.
\end{proof}

\section{Final remarks}

In this paper we have provided a solution to the major open question from \cite{GlasserEtAl2010} as well as begun the investigation of CSPs associated with Skolem Arithmetic. However, the thrust of our work must be considered exploratory and there are two major directions in which more work is necessary.

A perfunctory glance at the results of Section~\ref{sec:circuits} shows that our bounds are not tight, and it would be great to see some natural CSPs in this region manifesting complexities such as Pspce-complete. It is informative to compare our Table~\ref{tablecsp} with Table 1 in \cite{GlasserEtAl2010}. Our weird formulation of these CSPs belies the fact there are more natural versions where, for $\mathcal{O} \subseteq \{-,\cap,\cup,+,\times\}$, we ask about CSP$(\mathcal{P}(\mathbb{N});\mathcal{O})$, where $\mathcal{P}(\mathbb{N})$ is the power set of $\mathbb{N}$, rather than the somewhat esoteric CSP$(\{\mathbb{N}\};\mathcal{O})$. Indeed, if we replace complement ``$-$'' by set difference ``$\setminus$'', these questions could also be phrased for just the finite sets of  $\mathcal{P}(\mathbb{N})$. These are all reasonable questions where we are not so sure of the boundary of decidability. Indeed, the complexity might be higher when the domain is  $\mathcal{P}(\mathbb{N})$, in comparison to $\{\mathbb{N}\}$. 

Meanwhile, the results of Section~\ref{sec:reducts-of-Skolem} need to be extended to a classification of complexity for all CSP$(\Gamma)$, where $\Gamma$ is a reduct of Skolem Arithmetic $(\mathbb{N};\times)$. We anticipate the first stage is to complete the classification for CSP$(\mathbb{N};\times,U)$ where $U$ is fo-definable in $(\mathbb{N};\times)$. We conjecture that Theorem~\ref{thm:main-hard} is tight in the sense that the outstanding cases are in P (of course this is trivial if $0$ or $1 \in U$). 
\begin{conjecture}
\label{conj:1}
Let $U \subseteq  \mathbb{N}\setminus\{0,1\}$ be fo-definable in $(\mathbb{N};\times)$, such that $\mathrm{basis}(U)$ is not finite. Then CSP$(\mathbb{N};\times,U)$ is in P.
\end{conjecture}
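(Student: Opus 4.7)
The strategy is to exploit the fine structure of fo-definable subsets of $(\mathbb{N};\times)$. Since Skolem Arithmetic admits quantifier elimination in an enriched language (Mostowski) and the automorphism group of $(\mathbb{N};\times)$ acts by permuting primes, any fo-definable $U$ depends only on the multiset of exponents in the prime factorisations of its members. My first step is to make this precise by extracting a Presburger-style description of the admissible exponent patterns, and in particular examining the projection $E_U := \{k \in \mathbb{N}^+ : p^k \in U\}$ onto prime-power exponents. By prime-permutation invariance $E_U$ is independent of $p$, and by interpretation of the single-prime fragment of Skolem Arithmetic into Presburger arithmetic, $E_U$ is semilinear in $\mathbb{N}$.

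The main plan is to use pp-definability in $(\mathbb{N};\times,U)$ to exhibit, for some $m>1$, a pp-definable set $V$ with $\{m,m^2,m^3,\ldots\}\subseteq V$, and then invoke Proposition~\ref{prop:P} on the enriched structure. When $E_U$ is infinite, semilinearity gives an infinite arithmetic progression $\{a,a+d,a+2d,\ldots\}\subseteq E_U$; passing to a multiplicatively closed sub-progression (for instance by pp-defining the closure of a $U$-constrained variable under a fixed product) should yield a geometric sequence contained in a pp-definable subset of $U$. When the infinite basis is witnessed only by composite elements with several prime exponents simultaneously large, the pp-definition must combine exponent constraints across several prime slots; here the Presburger-definable shape of $U$ still provides enough structure to produce an $m$ and $V$, essentially by the same homomorphism-and-linear-programming reduction that underlies Lemma~\ref{lem:LPreduction}.

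The main obstacle, as I see it, is twofold. First, there exist fo-definable $U$ of the shape $\{p^k : p\text{ prime},\ k\text{ odd},\ k\geq 3\}$ for which no single $m>1$ satisfies $\{m,m^2,\ldots\}\subseteq U$ directly: here one must pp-define a strictly larger set than $U$ inside $(\mathbb{N};\times,U)$ and verify that the hypothesis of Proposition~\ref{prop:P} transfers along the pp-interpretation, which requires a delicate case analysis driven by the quantifier-free structure of $U$. Second, when $\mathrm{basis}(U)$ is infinite but $E_U$ is finite or even empty (so witnesses come from genuinely composite elements), the pp-definition must extract the geometric sequence from a combination of several prime coordinates rather than from a single one; producing this uniformly over all fo-definable $U$ seems to require matching the semilinear structure of the admissible exponent patterns to the multiplicative structure demanded by Proposition~\ref{prop:P}. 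I expect the technical heart of the proof to be precisely this matching, possibly reformulated as the identification of an appropriate polymorphism (in the sense of Theorem~\ref{algebrahardness}) on the pp-definable witness set $V$.
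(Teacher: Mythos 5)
You should first be aware that the statement you attempted is Conjecture~\ref{conj:1} of the paper: the authors do not prove it, they explicitly present it as open, and their final remarks point to the Presburger-side classification of CSP$(\mathbb{N};+,W)$ for $W$ fo-definable in $(\mathbb{N};+)$ (the work cited as \cite{DresdenNew}) as the likely source of the missing polynomial-time algorithms. So there is no proof in the paper to compare yours against; a correct argument would be a new result. Your text is in any case a programme rather than a proof (its key steps are flagged with ``should yield'' and ``seems to require''), and its central mechanism is flawed in a way that cannot be repaired as stated.

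The flaw is the direction of the pp-definability reduction. If $V$ is pp-definable in $(\mathbb{N};\times,U)$, the standard consequence is that CSP$(\mathbb{N};\times,U,V)$ reduces in polynomial time to CSP$(\mathbb{N};\times,U)$: pp-definitions let $V$-atoms be rewritten as formulas over $\{\times,U\}$, so they transfer \emph{hardness} from the defined relation down to the defining structure, never \emph{tractability} upward from $V$ to $U$. The paper itself uses pp-definitions in exactly this hardness direction: Lemma~\ref{lem:atelier} pp-defines a set $X$ with $\mathrm{basis}(X)=\{1\}$ inside $(\mathbb{N};\times,U)$ precisely in order to pull NP-hardness back to $U$ in Theorem~\ref{thm:main-hard}. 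Hence exhibiting a pp-definable $V\supseteq\{m,m^2,m^3,\ldots\}$ proves nothing about membership of CSP$(\mathbb{N};\times,U)$ in P. Nor can Proposition~\ref{prop:P} be ``invoked on the enriched structure'' $(\mathbb{N};\times,U,V)$: the proof of Lemma~\ref{lem:LPreduction} works via homomorphic equivalence of $(\mathbb{N}^+;\times,U)$ with $(\mathbb{N};+,x\geq 1)$, and the backward homomorphism $k\mapsto m^k$ is a homomorphism only because $\{m,m^2,\ldots\}\subseteq U$; it would have to preserve \emph{every} unary relation of the enriched structure, $U$ included. In your own test case $U=\{p^k: p \text{ prime},\ k \text{ odd},\ k\geq 3\}$, no map $k\mapsto m^k$ preserves $U$ (even exponents land outside $U$), so a geometric sequence sitting merely in a pp-definable superset buys nothing. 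The salvageable part of your proposal is the observation that exponent sets of an fo-definable $U$ are Presburger-definable: the natural repair is to show that (after the preprocessing of Lemma~\ref{lem:preprocessing}) $(\mathbb{N}^+;\times,U)$ is homomorphically equivalent to a structure $(\mathbb{N};+,W)$ with $W$ fo-definable in $(\mathbb{N};+)$ --- for your test case $W$ is the set of odd integers $\geq 3$ --- and then appeal to tractability on the Presburger side. But that last step is exactly the open problem the paper defers to \cite{DresdenNew}; it is not something Proposition~\ref{prop:P} already supplies.
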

If this can be proved, the task is to extend to all problems of the form CSP($\Gamma$) where $\Gamma$ is an fo-expansion of $(\mathbb{N};\times)$. Since adding disequality already results in an NP-hard problem, we imagine this improvement is achievable. Finally, the task is to consider reducts, and not just fo-expansions, of $(\mathbb{N};\times)$. This might be harder, and will surely involve a specialised theorem of the form `Petrus' in \cite{dCSPs2}.

The mechanism through which we derived a polynomial algorithm to solve Proposition~\ref{prop:P} belies a close relationship between Skolem Arithmetic and Presburger Arithmetic (integers with addition). This was well-known to Mostowski \cite{Mostowski52}, who was able to prove decidability of Skolem Arithmetic through seeing it as a certain weak direct product of Presburger Arithmetic. Thus, it makes sense to study  CSP$(\Gamma)$, where $\Gamma$ is a reduct of Presburger Arithmetic $(\mathbb{N};+)$, in tandem with the program for Skolem Arithmetic. A first step would be to classify CSP$(\mathbb{N};+,U)$ where $U$ is fo-definable in $(\mathbb{N};+)$ and a group in Dresden \cite{DresdenNew} is working on this problem. Their polynomial algorithms may yet yield the solution to Conjecture~\ref{conj:1}.



\pagebreak

\section*{Appendix}

\subsection*{Missing proofs from Section~\ref{sec:circuits}}

We start with the observation that the equivalence of arithmetic formulas
reduces to functional CSPs. This yields several lower bounds for the CSPs.

\noindent \textbf{Proposition~\ref{prop_5147}}.
    For $\O \subseteq \{\com,\cup,\cap,+,\times\}$ it holds that
    $\EF{\O} \redlogm \CSP{\O}$.

\begin{proof}
    An $\EF{\O}$-instance $(F_1,F_2)$ is mapped
    to the $\CSP{\O}$-instance $F_1=F_2$.
\end{proof}

\

\noindent \textbf{Corollary~\ref{coro_5147}}. \
    \begin{enumerate}
        \item $\CSP{\{\com,\cup,\cap,+\}}$ and $\CSP{\{\com,\cup,\cap,\times\}}$
        are $\redlogm$-hard for $\cPSPACE$.
        \item $\CSP{\{\cup,\cap,+\}}$, $\CSP{\{\cup,\cap,\times\}}$,
        $\CSP{\{\cup,+\}}$, and \\ $\CSP{\{\cup,\times\}}$
        are $\redlogm$-hard for $\cPi{2}$.
    \end{enumerate}

\begin{proof}
    The statements follow from Proposition~\ref{prop_5147}
    and the following facts \cite{ghrtw10}:
    $\EF{\{\com,\cup,\cap,+\}}$ and $\EF{\{\com,\cup,\cap,\times\}}$
    are $\redlogm$-complete for $\cPSPACE$.

\noindent   $\EF{\{\cup,\cap,+\}}$, $\EF{\{\cup,\cap,\times\}}$,
    $\EF{\{\cup,+\}}$, and $\EF{\{\cup,\times\}}$
    are $\redlogm$-complete for $\cPi{2}$.
\end{proof}

\


\noindent \textbf{Proposition~\ref{prop_73478}}.
    $\CSP{+,\times}$ is $\redlogm$-hard for $\Sigma_1$.

\begin{proof}
    By the Matiyasevich-Robinson-Davis-Putnam theorem \cite{mat70,dpr61},
    there exists an $n \in \N$ and a multivariate polynomial $p$
    with integer coefficients such that for every $A \in \Sigma_1$
    there exists an $a \in \N$ such that
    $$x \in A \;\;\Longleftrightarrow\;\; \exists y \in \N^n, p(a,x,y)=0.$$
    In the equation $p(a,x,y)=0$ we can move negative monoms and
    negative constants to the right-hand side.
    This yields multivariate polynomials $l$ and $r$ with coefficients from $\N$
    such that 
    $$x \in A \;\;\Longleftrightarrow\;\; \exists y \in \N^n, l(a,x,y)=r(a,x,y).$$
    The right-hand side is a $\CSP{+,\times}$-instance.
    Hence $A \redlogm \CSP{+,\times}$ for every $A \in \Sigma_1$.
\end{proof}

\

\noindent \textbf{Proposition~\ref{prop_83651}}.
    $\CSP{\cup,\cap,+,\times} \in \Sigma_1$.

\begin{proof}
    It is decidable whether a given assignment satisfies
    a $\CSP{\cup,\cap,+,\times}$-instance.
    Hence testing the existence of a satisfying assignment is in $\Sigma_1$.
\end{proof}

\

\noindent \textbf{Proposition~\ref{prop_67126}}.
    $\CSP{\com,\cup,\cap,+,\times} \in \Sigma_2$.

\begin{proof}
    By Gla{\ss}er et al.\ \cite{ghrtw10},
    $\EC{\com,\cup,\cap,+,\times} \in \Delta_2$.
    Consider an arbitrary $\CSP{\com,\cup,\cap,+,\times}$-instance
    $x := \exists y \in \N^n [t_0=t_1 \wedge \cdots \wedge (t_{2m}=t_{2m+1}]$.
    It holds that
    $$x \in \CSP{\com,\cup,\cap,+,\times} \;\;\Longleftrightarrow\;\;
    \exists y \in \N^n [\EC{t_0,t_1} \wedge \cdots \wedge \EC{t_{2m},t_{2m+1}}].$$
    The right-hand side is a $\Sigma_2$ predicate.
\end{proof}

\



\noindent \textbf{Proposition~\ref{prop_6286}}.
    $\CSP{\times}$ is $\redlogm$-hard for $\cNP$.

\begin{proof}
    It is known that $\tn{3SAT} \redlogm \SC{\{\cap,\times\}}$ \cite{grtw10}.
    The reduction has the additional property that it outputs pairs $(C,b)$
    where the circuit $C$ is connected in the sense that from each gate there
    exists a path to the output gate.
    Hence it suffices to construct a $\redlogm$-reduction that
    works on $\SC{\{\cap,\times\}}$-instances $(C,b)$ where $C$ is connected.
    
    For such a pair $(C,b)$ we construct a $\CSP{\times}$-instance
    where each gate $g$ is represented by the variable $g$.
    Moreover, each gate $g$ causes the following constraints:
    If $g$ is an assigned input gate with value $k \in \N$,
    then we add the constraint $g=k$.
    For unassigned input gates no additional constraints are needed.
    If $g$ is a $\times$-gate with predecessors $g_1$ and $g_2$,
    then we add the constraint $g = g_1 \cdot g_2$.
    If $g$ is a $\cap$-gate with predecessors $g_1$ and $g_2$,
    then we add the constraints $g = g_1$ and $g = g_2$.
    If $g$ is the output gate, then this causes the additional constraint $g=b$.
    Finally, if $g_1, \ldots, g_n$ are the gates in $C$
    and $c_1, \ldots, c_m$ are the constraints described above,
    then the reduction outputs the $\CSP{\times}$-instance
    $\varphi := \exists g_1, \ldots, g_n [c_1 \wedge \cdots \wedge c_m]$.

    It remains to argue that for connected $C$ it holds that
    $$(C,b) \in \SC{\{\cap,\times\}} \;\Longleftrightarrow\; \varphi \in \CSP{\times}.$$

    Assume $(C,b) \in \SC{\{\cap,\times\}}$ and
    consider an assignment that produces $\{b\}$ at the output gate.
    Since $C$ is connected, each gate $g_i$ computes a singleton $\{a_i\}$.
    Hence $a_1, \ldots, a_n$ is a satisfying assignment for $\varphi$,
    which shows $\varphi \in \CSP{\times}$.

    Assume $\varphi \in \CSP{\times}$.
    Let $a_1, \ldots, a_n$ be a satisfying assignment for $\varphi$ and let
    $l$ be the number of $C$'s input gates.
    The constraints in $\varphi$ make sure that $C(a_1, \ldots, a_l)$ produces
    $\{a_i\}$ at gate $g_i$.
    In particular, $C(a_1, \ldots, a_l)$ produces $\{b\}$ at the output gate,
    which shows $(C,b) \in \SC{\{\cap,\times\}}$.
\end{proof}

\

\noindent \textbf{Proposition~\ref{prop_6287}}.
    $\CSP{+}$ is $\redlogm$-hard for $\cNP$.

\begin{proof}
    It suffices to show $\SC{\{+\}} \redlogm \CSP{+}$ \cite{grtw10}.
    The proof is similar to the proof of Proposition~\ref{prop_6286},
    but easier, since we have no $\cap$-gates and hence we do not need the
    assumption that $C$ is connected.
\end{proof}

\

\noindent \textbf{Proposition~\ref{prop_12481}}.
    $\CSP{\com,\cap,\cup}$ is $\redlogm$-complete for $\cNP$.

\begin{proof}
    Consider a $\CSP{\com,\cap,\cup}$-instance
    $\varphi := \exists x_1, \ldots, x_n
    [t_1 = t'_1 \wedge \cdots \wedge t_m = t'_m]$.
    We show that if $\varphi \in \CSP{\com,\cap,\cup}$, then it
    has a satisfying assignment $a = (a_1, \ldots, a_n)$ such that
    $a_1, \ldots, a_n \in \{0, \ldots, n-1\}$.
    This implies $\CSP{\com,\cap,\cup} \in \cNP$.

    Assume $\varphi \in \CSP{\com,\cap,\cup}$ and
    choose a satisfying assignment $a = (a_1, \ldots, a_n)$ such that
    $a' = \max \{a_1, \ldots, a_n\}$ is minimal.
    Assume that $a' \ge n$, we will show a contradiction.
    Let $b' = \min (\N - \{a_1, \ldots, a_n\})$ and note that $b'<n$.
    Let $b$ be the assignment that is obtained from $a$
    if all occurrences of $a'$ are replaced with $b'$.
    For any term $t$ in $\varphi$, the sets computed by $t$
    under the assignments $a$ and $b$ are denoted by $t_a$ and $t_b$, respectively.
    Observe that for all terms $t$ in $\varphi$ and
    all $x \in \N - \{a',b'\}$ it holds that:
    \begin{eqnarray*}
        x \in t_a &\Longleftrightarrow& x \in t_b\\
        a' \in t_a &\Longleftrightarrow& b' \in t_b\\
        b' \in t_a &\Longleftrightarrow& a' \in t_b
    \end{eqnarray*}
    It follows that for all atoms $t=t'$ in $\varphi$ it holds that
    $$t_a=t'_a \Longleftrightarrow t_b=t'_b.$$
    Therefore, $b$ is a satisfying assignment that is smaller than $a$,
    which contradicts the minimal choice of $a$.

    For the $\cNP$-hardness it suffices to show
    $\tn{3SAT} \redlogm \CSP{\com,\cap,\cup}$.
    On input of a $\tn{3CNF}$-formula $t=t(x_1, \ldots, x_n)$
    the reduction outputs the instance of $\CSP{\com,\cap,\cup}$
    $$\varphi := \exists x_1, \ldots, x_n [t' \cap \{1\} = \{1\}],$$
    where $t'$ is obtained from $t$ by replacing $\neg, \wedge, \vee$
    with $\com, \cap, \cup$, respectively.
    Every satisfying assignment for $t$ also satisfies $\varphi$.
    Conversely, if $a_1, \ldots, a_n$ is a satisfying assignment for $\varphi$,
    then we obtain a satisfying assignment for $t$
    if values greater than $1$ are replaced with $0$.
\end{proof}

\

\noindent \textbf{Proposition~\ref{prop_981281}}. \
    \begin{enumerate}
        \item $\CSP{\cap,+} \redmNP \CSP{+,=,\neq}$.
        \item $\CSP{\cap,\times} \redmNP \CSP{\times,=,\neq}$.
    \end{enumerate}

\begin{proof}
    We show the first statement, the proof of the second one is analogous.

    For a term $t$, let $t'$ be the term obtained from $t$
    if every subterm of the form $s_1 \cap s_2$ is replaced with $s_1$.

    We describe the $\redmNP$-reduction on input of a $\CSP{\cap,+}$-instance
    $$\varphi := \exists x_1, \ldots, x_n
    [t_0 = t_1 \wedge \cdots \wedge t_{2m} = t_{2m+1}].$$
    For each atom $t_{2i} = t_{2i+1}$,
    we guess nondeterministically whether $t_{2i} = t_{2i+1} \in \{\N\}$ or
    $t_{2i} = t_{2i+1} = \emptyset$.
    If we guessed $t_{2i} = t_{2i+1} \in \{\N\}$,
    then replace $t_{2i}$ with $t'_{2i}$, replace $t_{2i+1}$ with $t'_{2i+1}$,
    and for every subterm $s_1 \cap s_2$ that appears in $t_{2i}$ or $t_{2i+1}$
    add the constraint $s'_1 = s'_2$.
    If we guessed $t_{2i} = t_{2i+1} = \emptyset$,
    then guess a subterm $u_1 \cap u_2$ in $t_{2i}$,
    guess a subterm $u_3 \cap u_4$ in $t_{2i+1}$,
    remove the atom $t_{2i} = t_{2i+1}$, and
    add the constraints $u'_1 \neq u'_2$ and $u'_3 \neq u'_4$.
    The obtained formula $\psi$ is the result of the $\redmNP$-reduction.

    We argue that the described $\redmNP$-reduction reduces the $\CSP{\cap,+}$
    to the $\CSP{+,=,\neq}$.

    Assume $\varphi \in \CSP{\cap,+}$ and fix some satisfying assignment
    $a = (a_1, \ldots, $ $a_n) \in \N^n$.
    Consider the nondeterministic path of the reduction
    that for all atoms correctly guesses whether
    $t_{2i} = t_{2i+1} \in \{\N\}$ or $t_{2i} = t_{2i+1} = \emptyset$,
    and that for all $t_{2i} = t_{2i+1} = \emptyset$ guesses
    subterms $u_1 \cap u_2$ in $t_{2i}$ and $u_3 \cap u_4$ in $t_{2i+1}$
    such that $u_1, u_2, u_3, u_4 \in \{\N\}$, $u_1 \neq u_2$, and $u_3 \neq u_4$.
    If $t_{2i} = t_{2i+1} \in \{\N\}$, then $t'_{2i}=t_{2i}=t_{2i+1}=t'_{2i+1}$
    and hence the formula is still satisfied after replacing
    $t_{2i}$ with $t'_{2i}$ and $t_{2i+1}$ with $t'_{2i+1}$.
    Moreover, the added constraints $s'_1 = s'_2$ are satisfied,
    since in $t_{2i}$ and $t_{2i+1}$ all subterms $s_1 \cap s_2$ must be nonempty.
    If $t_{2i} = t_{2i+1} = \emptyset$,
    then after removing the atom $t_{2i} = t_{2i+1}$ and after
    adding the constraints $u'_1 \neq u'_2$ and $u'_3 \neq u'_4$
    the formula is still satisfied.
    So at the described nondeterministic path
    the reduction outputs a formula $\psi \in \CSP{+,=,\neq}$.

    Assume there is a nondeterministic path
    where the reduction outputs a formula $\psi \in \CSP{+,=,\neq}$.
    Consider a satisfying assignment $a$ for $\psi$,
    we claim that $a$ satisfies $\varphi$.
    If this is not true, then $\varphi$ must have an atom
    $t_{2i} = t_{2i+1}$ that is not satisfied by $a$.

    Case 1: At the path that produced $\psi$ we guessed that
    $t_{2i} = t_{2i+1} \in \{\N\}$.
    In this case we added the constraints $s'_1 = s'_2$,
    which ensure that $t_{2i}=t'_{2i}$ and $t_{2i+1}=t'_{2i+1}$.
    Hence under the assignment $a$ it holds that $t_{2i}=t'_{2i}=t'_{2i+1}=t_{2i+1}$,
    which contradicts the assumption that $t_{2i} = t_{2i+1}$
    is not satisfied by $a$.

    Case 2: At the path that produced $\psi$ we guessed that
    $t_{2i} = t_{2i+1} = \emptyset$.
    Here we added the constraints $u'_1 \neq u'_2$ and $u'_3 \neq u'_4$,
    which are satisfied by $a$.
    Hence under the assignment $a$ we have $t_{2i} = t_{2i+1} = \emptyset$,
    which contradicts the assumption that $t_{2i} = t_{2i+1}$
    is not satisfied by $a$.

    It follows that $a$ satisfies $\varphi$ and hence $\varphi \in \CSP{\cap,+}$.
\end{proof}

\

\noindent \textbf{Corollary~\ref{coro_981281}}.
    $\CSP{\cap,+}, \CSP{\cap,\times} \in \cNP$.

\begin{proof}
    $\CSP{+,=,\neq}$-instances and $\CSP{\times,=,\neq}$-instances are formulas of
    existential Presburger arithmetic and existential Skolem arithmetic,
    which are both decidable in $\cNP$ \cite{Sca84,gra89}.
    Now the statement follows from Proposition~\ref{prop_981281}.
\end{proof}


\end{document}